\documentclass[11pt,longbibliography]{article}

\usepackage[margin=1in]{geometry}
\usepackage{amsmath,amssymb,amsthm,mathtools}
\usepackage{booktabs}
\usepackage{enumitem}
\usepackage{graphicx}
\usepackage[dvipsnames]{xcolor}
\usepackage{algorithm}
\usepackage{algpseudocode}
\usepackage{comment}

\usepackage{authblk}
\usepackage{tikz}
\usetikzlibrary{positioning}
\usepackage{subcaption}

\usepackage[colorlinks=true,citecolor=blue,linkcolor=magenta,urlcolor=magenta]{hyperref}

\usepackage[
   backend=biber,
    style=phys,
    sorting=none,
    biblabel=brackets,
]{biblatex}

\renewbibmacro*{note+pages}{%
   \printfield{note}%
    \setunit{\bibpagespunct}%
    \printfield{pages}%
}

\DeclareFieldFormat[misc]{title}{%
    \printtext[doi/url-link]{\mkbibemph{#1}}%
}

\theoremstyle{plain}
\newtheorem{theorem}{Theorem}
\newtheorem{lemma}{Lemma}

\newtheorem{corollary}{Corollary}
\newtheorem{definition}{Definition}

\DeclareMathOperator{\inv}{inv}
\DeclareMathOperator{\nneg}{neg}
\DeclareMathOperator{\sgn}{sgn}

\newcommand{\id}{\mathbb{I}}

\newcommand{\SO}{\mathbb{SO}}

\newcommand{\SPIN}{\mathbb{SPIN}}

\newcommand{\GC}{\mathcal{G}}

\newcommand{\OC}{\mathcal{O}}

\newcommand{\RC}{\mathcal{R}}

\renewcommand{\vec}[1]{\boldsymbol{#1}}

\begin{document}

\title{Complexity separations for optimal matchgate-Clifford synthesis}

\author[1,2]{Berta Casas}
\affil[1]{Barcelona Supercomputing Center, Plaça Eusebi G\"uell, 1-3, 08034 Barcelona, Spain}
\affil[2]{Information Sciences, Los Alamos National Laboratory, Los Alamos, NM 87545, USA}

\author[3]{Diego Garc\'ia-Mart\'in}
\affil[3]{Department for Quantum Information and Computation at Kepler (QUICK),  Johannes Kepler University, Linz, Austria}

\author[4,5]{Yuxuan Zhang \thanks{\texttt{yuxuan.zhang@epfl.ch}}}
\affil[4]{Institute of Physics, \'Ecole Polytechnique F\'ed\'erale de Lausanne (EPFL), 1015 Lausanne, Switzerland}
\affil[5]{Department of Physics, Princeton University, Princeton, New Jersey 08544, USA}

\date{\vspace{-4ex}}
\maketitle

\begin{abstract}
Clifford and matchgate circuits are canonical families of classically simulable quantum circuits. Their intersection, the matchgate-Clifford group, plays an important role in randomized fermionic protocols and in matchgate synthesis. Its adjoint action is isomorphic to the group of unit-determinant signed permutations of $2n$ Majorana modes, and we study optimal exact synthesis in this group. That is, given a target unitary and a gate set, output an $n$-qubit circuit implementing the target using the fewest operations. We show that the complexity of this problem strongly depends on the gate set. In particular, we study gate sets consisting of Majorana braids with different connectivity graphs. For path and complete graphs, we prove that the problem is classically solvable in $\mathcal{O}\left(n^2\right)$ time, and we provide explicit gate-optimal compilers. In addition, we prove that when the connectivity graph is a tree, the decision version of the optimal synthesis problem becomes NP-complete. Finally,  we benchmark our optimal compiler on chains of up to $n=80$ qubits against those of \texttt{Qiskit} and \texttt{Tket}, obtaining circuits with constant factor  improvements $\times2.57$ and $\times2.28$  in the total number of gates, respectively.

\end{abstract}

\section{Introduction}\label{sec:intro}

Unitary synthesis is a fundamental problem in quantum computing~\cite{nielsen2000quantum}. Given a target unitary realizing a desired polynomial-size quantum computation, one needs to decompose it (either exactly or approximately) using a fault-tolerant gate set. 
Naturally, it is desirable to find a circuit realization that uses fewer operations, as this can greatly reduce errors and total runtime~\cite{fowler2012surface,litinski2019game}. The task of finding a decomposition with the least number of gates, known as optimal synthesis, is however a hard problem in general. Its precise complexity has not been pinned down yet, but it has been shown that minimizing the $T$-count using the Clifford$+T$ gate set is NP-hard (lower bound) and contained in $\operatorname{NP}^{\rm NQP}$ (upper bound)~\cite{van2023optimising}. Thus, finding optimal circuits is generally hopeless when the number of qubits grows beyond just a handful~\cite{gouzien2025provably}. 

This has motivated the study of the synthesis problem for restricted classes of unitaries, where additional algebraic structure can be exploited. 
For instance, an impressive series of results has led to optimal single-qubit unitary compilers~\cite{matsumoto2008representation,giles2013remarks,kliuchnikov2013asymptotically,ross2014optimal,gridsynth,kliuchnikov2023shorter,yamamoto2026pygridsynth,pygridsynth,morisaki2025optimal}. Clifford circuit synthesis has also been extensively studied~\cite{maslov2007linear,maslov2018shorter,bravyi2021clifford,bravyi2021hadamard,bravyi20226,peham2023depth,acuaviva2026cnot,li2026vanilla,keller2026variant}, and the decision version of optimal synthesis for circuits consisting of CNOT gates  or CZ gates between arbitrary pairs of qubits has been proven to be NP-complete~\cite{acuaviva2026cnot,li2026vanilla,keller2026variant}. More recently, another important group of unitaries whose synthesis was addressed is that of matchgate or free-fermionic unitaries~\cite{casas2025matchgate}. Here, the complexity of optimal synthesis remains open.

\begin{figure}
    \centering
    \includegraphics[width=1\linewidth]{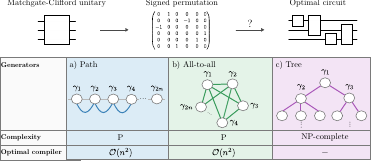}
    \caption{{\bf Overview of our main results.} We consider the problem of optimal unitary synthesis in the matchgate-Clifford group. Representing the target $n$-qubit unitary as a $2n\times 2n$ signed permutation matrix with determinant $+1$, we ask for an optimal circuit with a minimum number of Majorana braids along the edges of a given connectivity graph. We show that for path and complete graphs, the problem can be solved efficiently on a classical computer by providing explicit algorithms that run in  $\OC\left(n^2\right)$ time. In addition, we prove that the problem becomes NP-complete when the connectivity graph is an arbitrary tree.}
    \label{fig:Fig1}
\end{figure}

Importantly, the ultimate goal is not to compile an isolated single-qubit, Clifford or matchgate unitary, but rather to treat the circuits as chunks of a larger quantum computation. Indeed, any unitary can be approximated to arbitrary precision using layers of Clifford gates interleaved with layers of $T$ gates~\cite{nielsen2000quantum}. Similarly, any fermionic unitary can be factorized into layers of matchgates interleaved with layers of interacting operations~\cite{oszmaniec2017universal}.  The mathematical structure of the Clifford and matchgate unitaries can then be leveraged to exponentially reduce the size of the matrices to be synthesized,  potentially allowing for significant savings in the gate counts of the corresponding layers. In turn, this would reduce the gate count of the entire computation as well.

In this work, we focus on the group of matchgate-Clifford unitaries~\cite{bravyi2005universal}, i.e., the intersection of the matchgate and Clifford groups (see Section~\ref{sec:preliminaries} for details). This group forms a matchgate $3$-design~\cite{wan2022matchgate,heyraud2024unified} and can therefore be used in randomized fermionic protocols such as fermionic classical shadows~\cite{wan2022matchgate,heyraud2024unified,west2026particle}, benchmarking of quantum devices~\cite{helsen2022matchgate,carrasco2024gaining,portik2025clifford}, or fermionic sampling~\cite{oszmaniec2022fermion}. Moreover, it plays a crucial role in matchgate synthesis, analogous to that played by the Clifford group in general unitary compilation. Namely, the matchgate-Clifford group together with the $T$ gate (up to a phase) is universal for matchgates~\cite{casas2025matchgate}. Hence the motivation to study optimal synthesis in this group. 

At the mathematical level, the adjoint representation of $n$-qubit matchgate-Clifford unitaries is isomorphic to the group of $2n\times 2n$ signed permutation matrices with determinant $+1$ acting on the vector space of Majorana operators. Optimal matchgate-Clifford synthesis can therefore be formulated as an optimal word problem in this group, and consequently, our results may find application in the field of combinatorial group theory  beyond the quantum setting~\cite{lyndon2001combinatorial}.

We consider gate sets consisting of Majorana braids and their inverses~\cite{bravyi2005universal} on different connectivity graphs, where a braid acts as a signed transposition between pairs of Majoranas connected by an edge. The optimal synthesis problem then asks to realize the desired signed permutation using the smallest number of braids along the edges of the connectivity graph. Our main results are summarized in Fig.~\ref{fig:Fig1}. For both path and complete graphs, we prove that optimal synthesis can be solved classically in $\OC\left(n^2\right)$ time. In particular, we find efficient algorithms for optimally assigning the signs, which combined with well-known sorting algorithms for permutation matrices~\cite{miltzow2016approximation} provide explicit optimal compilers. In addition, for path graphs we investigate the optimal two-qubit-gate count, bound it, and prove that there exists a circuit simultaneously achieving this quantity and optimal total gate count.

Furthermore, we benchmark our path-graph optimal compiler on chains of  up to $n=80$ qubits against those of \texttt{Qiskit}~\cite{javadi2024quantum,qiskit} and \texttt{Tket}~\cite{sivarajah2020tket}, obtaining circuits with constant factor improvements  $\times2.57$ and $\times2.28$  in the total number of gates, respectively. We choose to benchmark the path graph compiler because the generators are simply single-qubit $S$ gates (up to a phase) and two-qubit $R_{xx}\left(\frac{\pi}{2}\right)$ rotations, under the Jordan-Wigner mapping between fermions and qubits. Thus, this case is the most relevant for matchgate circuits in practice~\cite{casas2025matchgate}~\footnote{We remark that this may no longer be true under a different fermion-to-qubit mapping~\cite{bravyi2002fermionic,verstraete2005mapping,jiang2020optimal}.}. 
We nonetheless emphasize that, alternatively to the Jordan-Wigner picture where one works with qubits and matchgates, one  could instead consider a genuinely fermionic device where many different connectivities between Majorana modes may be possible in principle~\cite{bravyi2002fermionic,alicea2011non}. In such context, our results provide guidance as to which topologies may be more beneficial in practice.

Next, we study arbitrary tree connectivity and show that the decision version of the optimal synthesis problem is NP-complete. The complexity is inherited from the NP-completeness of  token swapping on trees~\cite{aichholzer2021hardness}, which asks to minimize the number of swaps along the edges of an arbitrary tree to realize a given permutation of the tokens (in our case, the tokens would be the Majorana operators, except that we must also account for the signs). Finally, by providing a polynomial upper bound on the number of Majorana braids needed to synthesize a matchgate-Clifford unitary on any connected graph, we find that the synthesis problem on arbitrary connectivities is in NP and thus it is also NP-complete (since it is already complete for trees).

\section{Preliminaries}\label{sec:preliminaries}

In this section, we introduce the concepts that will be used in the following sections and set up notation. 

\subsection{The matchgate group}
The matchgate group on $n$ qubits is generated by single-qubit $Z$ and two-qubit $XX$ rotations with nearest-neighbor connectivity in 1D with open boundaries,
\begin{equation}
    R_z^{q}(\theta)  = e^{i\theta Z_q/2}\,,
    \qquad
    q=1,\ldots,n\,,
\end{equation}
and
\begin{equation}
    R_{xx}^{q,q+1}(\theta) = e^{i\theta X_qX_{q+1}/2}\,,
    \qquad
    q=1,\ldots,n-1\,,
\end{equation}
where $Z_q$ denotes the Pauli $Z$ acting on qubit $q$ and
$X_qX_{q+1}$ the Pauli $X\otimes X$ acting on neighboring qubits
$q$ and $q+1$. This group of unitaries is isomorphic to the Lie group $\mathbb{SPIN}(2n)$, and we recall that since the parity operator, $Z^{\otimes n}$, commutes with the rotation generators $Z_q$  and $X_qX_{q+1}$, matchgate unitaries are parity-preserving. Under the Jordan-Wigner transformation between qubits and fermions, matchgate dynamics are equivalent to free-fermionic unitaries in second quantization~\cite{knill2001fermionic,terhal2002classical,divincenzo2005fermionic}. For this reason, matchgates appear naturally in fermionic simulation primitives~\cite{kivlichan2018quantum,arute2020hartree,arrazola2022universal,verstraete2009quantum,kraus2011compressed,cervera2018exact,jiang2018quantum,dallaire2019low,sopena2022algebraic,ruiz2024bethe,ruiz2024efficient,kokcu2022fixed,kokcu2022algebraic},  where they are associated to the non-interacting part of the Hamiltonian (e.g., after Trotterization). Also known as fermionic Gaussian unitaries, they have been extensively studied in quantum computing, quantum information and quantum simulation~\cite{valiant2001quantum,knill2001fermionic,terhal2002classical,bravyi2004lagrangian,bravyi2005universal,divincenzo2005fermionic,kivlichan2018quantum,arute2020hartree,arrazola2022universal,verstraete2009quantum,kraus2011compressed,cervera2018exact,jiang2018quantum,dallaire2019low,sopena2022algebraic,ruiz2024bethe,ruiz2024efficient,kokcu2022fixed,kokcu2022algebraic,jozsa2008matchgates,jozsa2010matchgate,ramelow2010matchgate,brod2011extending,brod2014computational,brod2016efficient,oszmaniec2017universal,zhao2021fermionic,wan2022matchgate,helsen2022matchgate,oszmaniec2022fermion,diaz2023showcasing,carrasco2024gaining,mele2024efficient,stroeks2024solving,heyraud2024unified,braccia2025optimal,portik2025clifford,sierant2026theory,braccia2026commutant,lastres2026geometry,west2026particle,west2026fermionic,west2026fermionicgaussianity}.

For an $n$-mode fermionic system, we introduce the corresponding $2n$ Majorana operators,
\begin{equation}
    \gamma_1,\,\gamma_2,\,\ldots\,,\,\gamma_{2n}\,,
\end{equation}
which satisfy the canonical anti-commutation relations $\{\gamma_a,\gamma_b\}=2\delta_{ab} \id$. In Jordan-Wigner form, they are simply the following anti-commuting set of Paulis,
\begin{align}\nonumber
    \gamma_1&=XI\cdots I\,,\; \gamma_3= ZXI\cdots I\,,\quad \;\dots,\; \gamma_{2n-1}=Z\cdots\, Z X\,, \nonumber\\
        \gamma_2&=YI\cdots\, I\,,\; \gamma_4= ZYI\cdots I\,,\quad \;\dots, \;\; \gamma_{2n}\;\;\;=Z\cdots Z Y\,.
\end{align}
Importantly, all matchgate unitaries can be expressed as the exponential of a Hamiltonian that is strictly quadratic in the Majorana operators, and vice versa.  
The adjoint action of a matchgate unitary $U$ is linear on the vector space of Majorana operators, and it is specified by
\begin{equation}\label{eq:conjugation_majorana}
    U\gamma_a U^\dagger = \sum_{b=1}^{2n} R(U)_{ba}\gamma_b\,,
\end{equation}
where $R(U)\in \mathbb{SO}(2n)$ is a $2n\times 2n$ special orthogonal matrix~\cite{jozsa2008matchgates}. Notice that both $U$ and $-U$ are associated to the same orthogonal matrix (i.e., $R(U)=R(-U)$), since $\SPIN(2n)$ is the double cover of $\SO(2n)$~\cite{guaita2024representation}. Working with $R(U)$ in the standard representation of $\SO(2n)$ allows to bypass the exponential size of the matchgate unitaries $U$, which is crucial to properly address the complexity of optimally synthesizing them~\cite{casas2025matchgate}.

\subsection{The matchgate-Clifford group}

We are interested in the intersection of the matchgate and Clifford groups. We recall that  the Clifford group is the normalizer of the Pauli group in the unitary group (up to global phases), i.e., Clifford unitaries  map elements of the Pauli group to elements of the Pauli group under conjugation. Since each Majorana operator is itself a Pauli string in the qubit picture, the adjoint action of a Clifford unitary must map every Majorana operator to another Pauli operator. If $U$ is also matchgate, then the Heisenberg-evolved Majorana operator must remain within the Majorana subspace as per Eq.~\eqref{eq:conjugation_majorana}. Therefore, when $U$ is a matchgate-Clifford unitary, we find that
\begin{equation} \label{eq:matchgate-Clifford-adjoint}
    U\gamma_{a}U^\dagger = \pm \gamma_{\pi(a)}\,,
\end{equation}
for some permutation $\pi$ in the symmetric group $S_{2n}$. Consequently, the matrix $R(U)$ has exactly one nonzero entry in each row and column, equal to $\pm1$, and is a signed permutation matrix with determinant equal to $+1$. 
The group of signed permutation matrices is isomorphic to the hyperoctahedral group~\footnote{The group of symmetries of the hypercube.}, and we denote it by $B_{2n}$. Consequently, the matchgate-Clifford adjoint action is isomorphic to $B_{2n}\cap \SO(2n)$, and has order $2^{2n-1}(2n)!$.

Formally, a signed permutation of $[2n]=\{1,\dots,2n\}$  is a bijection $\tilde{w}$ of
$\{\pm1,\dots,\pm 2n\}$ with $\tilde w(-i)=-\tilde w(i)$, and so we write $\tilde w=[\tilde w(1),\dots,\tilde w(2n)]$. Considering its matrix form $w$, we have 
\begin{equation}
    w \,\vec{e}_i = \sgn(\tilde w(i))\, \vec{e}_{|\tilde w(i)|}\,,
\end{equation}
where $w\in B_{2n}$ and $\vec{e}_i$ are the canonical-basis vectors in $\mathbb{R}^{2n}$. The underlying unsigned permutation is $\pi:=|\tilde w|=\left[|\tilde w(1)|,\ldots, |\tilde w(2n)|\right]\in S_{2n}$ (see Eq.~\eqref{eq:matchgate-Clifford-adjoint}). We also denote 
\begin{equation}
    \nneg(w):=\#\{i:\tilde w(i)<0\}\,,
\end{equation}
which counts the number of negative signs in $w$, and it is related to the determinant by the identity
\begin{equation} \label{eq:determinant}
    \det(w) = \sgn(\pi)(-1)^{\nneg(w)}\,,
\end{equation}
where $\sgn(\pi)$ is the parity of $\pi$~\footnote{A permutation $\pi$ is said to be even if $\sgn(\pi)=+1$ and odd if $\sgn(\pi)=-1$.}. The parity in turn is given by
\begin{equation}
\sgn(\pi)=(-1)^{\inv(\pi)}\,,
\end{equation}
with $\inv(\pi)$ the number of inversions of $\pi$, which we define now.

\begin{definition}[Inversion number]
Let $\pi\in S_{2n}$ be a permutation. An inversion of $\pi$ is a pair
$(i,j)$, with $1\leq i<j\leq 2n$, such that $\pi(i)>\pi(j)$. The inversion
number of $\pi$ is defined as
\begin{equation}
    \inv(\pi) :=
    \#\{(i,j):1\leq i<j\leq 2n\,,\ \pi(i)>\pi(j)\}\,.
\end{equation}
\end{definition}

For example, let us consider $\pi = [3,2,1,4]$. We have that $\inv(\pi)= 3$ because $\pi(1)> \pi(2)$, $\pi(1)>\pi(3)$ and $\pi(2)> \pi(3)$. 

Therefore, in order for a signed permutation matrix to belong to $B_{2n}\cap \SO(2n)$ (i.e., have $\det(w) = +1$), it must be such that the sum of its inversion number and the number of negative entries is even. Those will be the target matrices for which we will study the problem of optimal synthesis.

\subsection{Optimal matchgate-Clifford synthesis}
\label{sec:generators}
Given a generating set $\mathcal G\subseteq  B_{2n} \cap \SO(2n)$, and a target
$w\in  B_{2n} \cap \SO(2n)$, we define
\begin{equation}
    \ell_{\mathcal G}(w) :=  \min_{g_i\in\mathcal G}\left\{
    \ell :\; w=g_1g_2\cdots g_\ell
    \right\}\,.
\end{equation}
The quantity $ \ell_{\mathcal G}(w)$ is therefore the minimum number of gates from $\mathcal{G}$ needed to synthesize $w$. The optimal synthesis problem consists in finding a word achieving
$\ell_{\mathcal G}(w)$.  The decision version of the problem is obtained by asking whether there exists a word $w$ achieving $\ell_{\mathcal G}\leq k$ for some given $k$. Clearly, the decision version belongs to the complexity class NP (Non-deterministic Polynomial time) when $\ell\in\mathcal{O}(\operatorname{poly}(n))$, because given a witness sequence $g_1g_2\cdots g_\ell$, it is efficient to check whether it actually implements $w$ by simply multiplying the matrices. However, as we show in the next section, the complexity of the problem turns out to depend strongly on the allowed generators.

\section{Results}
We study the complexity of optimal matchgate-Clifford
synthesis for different choices of generating sets. We consider gate sets consisting of Majorana braids on different connectivity graphs. That is, unitaries of the form
\begin{equation}
    U = e^{\pm\frac{\pi}{4} \gamma_i\gamma_j}\,,
\end{equation}
with $(i,j)$ an edge in the connectivity graph. The adjoint action of the braids on Majorana operators correspond to signed transpositions, given by permutation (sub-)matrices on the $i,j$ plane,
\begin{equation} \label{eq:transpositions}
    R(U)[i,j\,; i,j] = \pm\begin{pmatrix}
        0 & 1 \\ -1 & 0 
    \end{pmatrix}\,.
\end{equation}
These gates have the important property that  $R(U)^2[i,j\,; i,j]=-\id$.

\medskip

We consider one-dimensional, all-to-all and tree connectivities, and we prove the following results.

\begin{itemize}
    \item On a path graph, we show in Section~\ref{sec:path} that the optimal synthesis problem can be solved classically in $\mathcal{O}\left(n^2\right)$ time, and provide a closed-form expression for the optimal word length.  We further benchmark our optimal synthesis algorithm against other standard compilers on up to $n=80$ qubits in Section~\ref{sec:benchmark}, finding circuits with constant factors improvements ranging from $\sim\times1.5$ and $\sim\times3$ in the number of total gates and two-qubit gates. Finally, in Section~\ref{sec:two_qubits} we  derive bounds on the minimum number of two-qubit gates required to optimally synthesize  a given matchgate-Clifford unitary.

    \item On a complete graph, we show in Section~\ref{sec:complete} that the optimal synthesis can also be solved classically in $\mathcal{O}\left(n^2\right)$ time, and provide a closed-form expression for the optimal word length.  
    
    \item On tree graphs, we show in Section~\ref{sec:tree} that the decision version of the problem is NP-complete. As a corollary, we show that the problem is NP-complete on general connected graphs in Section~\ref{sec:connected}.

\end{itemize}

\subsection{Path graph}
\label{sec:path}

We first consider the physically motivated case when the generators
correspond to matchgate-Clifford operations on nearest-neighboring qubits on an open path. A natural choice for the generators is given by the single-qubit phase gate $S_q$ acting on the $q$-th qubit, and
the two-qubit gate $R_{xx}^{q,q+1}\left(\frac{\pi}{2}\right)$ acting on qubits
$q$ and $q+1$. In matrix form,
\begin{equation}
    S= R_z \left(-\frac{\pi}{2}\right) \propto
    \begin{pmatrix}
        1 & 0\\
        0 & i
    \end{pmatrix},
    \qquad
    R_{xx}\left(\frac{\pi}{2}\right)
    =
    \frac{1}{\sqrt{2}}
    \begin{pmatrix}
        1 & 0 & 0 & i\\
        0 & 1 & i & 0\\
        0 & i & 1 & 0\\
        i & 0 & 0 & 1
    \end{pmatrix}.
\end{equation}
These gates generate the matchgate-Clifford group~\cite{casas2025matchgate} and are native in some important hardware platforms such as neutral atoms~\cite{mitra2020robust} and trapped ions~\cite{blumel2021power}. 
Under the adjoint action introduced in Eq.~\eqref{eq:conjugation_majorana}, the associated orthogonal matrices $R(S_q^\dagger)$ and $R\left(R_{xx}^{q,q+1}\left(\frac{\pi}{2}\right)\right)$ act as $\frac{\pi}{2}$ rotations (i.e., signed transpositions)
of two adjacent Majorana modes,
\begin{equation}
    \gamma_a \mapsto -\gamma_{a+1}\,,
    \qquad
    \gamma_{a+1}\mapsto \gamma_a\,,
\end{equation}
while leaving all other Majorana operators unchanged (see Eq.~\eqref{eq:transpositions}). 
We denote the corresponding signed permutation matrices by $r_a$. More precisely,
\begin{equation}\label{eq:ra_def}
    r_a :=
    \begin{cases}
        R(S_q^\dagger)\,, & a=2q-1\,, \quad q\in[n]\,.\\[1mm]
        R\!\left(R_{xx}^{q,q+1}(\pi/2)\right)\,, & a=2q\,, \quad\quad\;\; q\in[n-1]\,.
    \end{cases}
\end{equation}
At the level of the abstract signed permutations,
\begin{equation} \label{eq:r_a}
     \tilde r_a =
    [1,\ldots,a-1,-(a+1),a,a+2,\ldots,2n]\,,
\end{equation}
while its inverse is 
\begin{equation}
   \tilde r_a^{-1}=
    [1,\ldots,a-1,a+1,-a,a+2,\ldots,2n]\,.
\end{equation}
\medskip

Now we consider the optimal synthesis problem with the following set of generators, 
\begin{equation}\label{eq:generators_path_garph}
    \RC:= \left\{r_a\,,\, r_a^{-1}\right\}_{a = 1}^{2n-1}\,.
\end{equation}
We find that to obtain the minimum length to synthesize a given signed permutation $w\in  B_{2n} \cap \SO(2n)$ using gates from $\RC$, the following quantity plays a crucial role.
\begin{definition}
    Given $w\in  B_{2n} \cap \SO(2n)$, we define
    \begin{equation}
        \Delta_c(w):= \det (w[1,\ldots,c \, ; 1,\ldots,c])\,, \qquad 1\leq c < 2n\,,
    \end{equation}
    where $w[1,\ldots,c \, ; 1,\ldots,c]$ denotes the $c\times c$ submatrix formed by the
first $c$ rows and first $c$ columns of $w$.     
    We say that $c$ is a cut point of $w$ if 
    \begin{equation}
        \Delta_c(w)\neq 0\,.
    \end{equation}
    If, in addition, $\Delta_c(w) = -1$, we say that $c$ is a negative defect or bad cut. 
\end{definition}

Equivalently, $c$ is a cut point of $w$ whenever $\pi([c]) = [c]$, being $\pi$ the underlying permutation in $w$. Cut points are the indices of the final rows of each block $w_k$ in the maximal direct sum decomposition of the permutation matrix in the (fixed) canonical basis,
    \begin{equation} \label{eq:permutation-blocks}
        w = \bigoplus_k w_k\,. 
    \end{equation}
    Notice that each $w_k$ is itself a (sub-)permutation matrix, although the decomposition in Eq.~\eqref{eq:permutation-blocks} is not irreducible in general. This is because the irreducible decomposition is given by the cycle decomposition, so unless every cycle only involves consecutive integers, there is at least one block $w_k$ which is reducible. However, for our purposes, this decomposition turns out to be the most useful. For instance, it is easy to see that the negative defects appear for values of $c$ such that the partial product of the block determinants $\det(w_k)$ is equal to $-1$.

    \medskip

    After these considerations, we can state our first main result, proven in Appendix~\ref{ap:proof_main_thm}.

\begin{theorem}\label{thm:path-optimal}
    Consider the generating set $\RC= \left\{r_a, r_a^{-1}\right\}_{a = 1}^{2n-1}$ and the potential
    \begin{equation}\label{eq:phi(w)_theorem}
        \Phi(w)=\inv(\pi)+2\,\nu(w)\,,\qquad
   \nu(w)=\#\{\,c\in[2n-1]:\Delta_c(w)=-1\,\}\,,
    \end{equation}
    with $\nu(w)$ the number of negative defects. Then, for every target $w\in  B_{2n} \cap \SO(2n)$, there is a classical algorithm that runs in $\OC(n^2)$ time and outputs a sequence of gates from $\RC$ that synthesizes $w$ with optimal length $\ell_{\RC}(w)=\Phi(w)$.
\end{theorem}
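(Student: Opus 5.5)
Our plan is the standard potential-function argument: a lower bound showing that every generator changes $\Phi$ by at most one, and a constructive upper bound via a greedy algorithm that decreases $\Phi$ by exactly one per step. Since the identity has $\Phi(\id)=0$ (no inversions, and $\Delta_c(\id)=1$ for all $c$), these two facts give $\ell_{\RC}(w)=\Phi(w)$.

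\textbf{Lower bound.} We fix $w$ and a generator $g=r_a^{\pm1}$, and show $|\Phi(wg)-\Phi(w)|\le 1$. Right multiplication by $r_a^{\pm1}$ swaps columns $a,a+1$ of $w$ and flips the sign of one of them. Hence $\inv$ of the underlying permutation changes by exactly $\pm1$. For the cuts, $\Delta_c$ is unchanged for every $c\neq a$. The leading $c\times c$ block for $c<a$ is untouched. For $c>a$, the block undergoes a column transposition together with a single sign flip, and these two effects cancel in the determinant. So only the cut status at $c=a$ can change.

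We then split into two cases according to whether the swap is an ascent or a descent of $\pi$ at position $a$.
\begin{itemize}[label={}]
\item \emph{Ascent} ($\pi(a)<\pi(a+1)$). Here $\inv$ goes up by one. If $a$ was a cut of $w$, then $\pi(a+1)>a\ge\pi(a)$, so after the swap $a$ is no longer a cut. Its possible $-1$ defect then disappears, while $\inv$ rises by one.
\item \emph{Descent}. Here $\inv$ decreases, and $a$ cannot have been a cut of $w$, because $\pi(a)>\pi(a+1)$ would force $\pi(a)>a$. After the swap, $a$ may become a cut with $\Delta_a$ equal to $\pm1$, depending on the chosen sign.
\end{itemize}
Tracking the possibilities, $\Delta\Phi\in\{+1,-1,-1+2\}$ in the descent case and $\{+1,+1-2\}$ in the ascent case. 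So $|\Delta\Phi|=1$ always, consistent with parity, since $\det=+1$ forces $\Phi$ even in the relevant sense. This yields $\ell_{\RC}(w)\ge\Phi(w)$.

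\textbf{Upper bound and algorithm.} We show that whenever $w\ne\id$, some generator decreases $\Phi$ by one.
\begin{itemize}[label={}]
\item If $\pi$ has a descent at $a$, we apply the swap with whichever sign choice makes $\Delta_a=+1$ (if $a$ becomes a cut), or with either sign otherwise. This gives $\Delta\Phi=-1$.
\item If $\pi=\mathrm{id}$ but $w$ has negative signs, then $w$ is diagonal with an even number of $-1$ entries, and the negative defects are the $c$ where the prefix product is $-1$. These come in intervals $[i,j)$ between consecutive paired negative entries. Applying $r_i^{\pm1}$ at the start of such an interval creates one inversion and removes defect $i$, so $\Phi$ drops by $2-1=1$, and the descent case then continues.
\end{itemize}
Greedily repeating these moves produces a word of length $\Phi(w)$. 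For the runtime, we first compute $\inv$ and all $\Delta_c$ (prefix products of block determinants) in $\OC(n^2)$, or better. Each step is then an adjacent swap, and we maintain a bubble-sort order in which descents are processed left to right, so $\Phi(w)=\OC(n^2)$ steps each cost $\OC(1)$ amortized. This requires updating the cut status locally: only $\Delta_a$ changes, and it is computable from $\max\pi([a])$ and the running sign product.

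\textbf{Main obstacle.} The main obstacle is the careful sign bookkeeping. Concretely, we must verify that (i) $\Delta_c$ for $c>a$ is truly invariant under the signed swap, and (ii) in the descent case a sign choice always exists that avoids creating a bad cut. For (i), we would compute directly with $\tilde r_a$. Point (ii) holds because $r_a$ and $r_a^{-1}$ produce opposite signs on the moved column, so the two choices give opposite values of $\Delta_a$.
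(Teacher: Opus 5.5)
Your argument is correct. The lower bound is the paper's Lemma~\ref{lem:potential_lower_bound} written in the right-multiplication convention: you swap positions rather than values, but the bookkeeping is the same. Namely, $\Delta_c$ is invariant for $c\neq a$, and a cut at $a$ can only be destroyed at an ascent and only created at a descent.

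The upper bound takes a genuinely different route.
\begin{itemize}
\item The paper builds the word globally. It splits $w$ into cut-free blocks and proves that cut-free permutations have connected inversion graphs (Lemma~\ref{le:connected_graph}). It deduces that reorienting braids along paths in that graph realizes every sign vector of the correct parity inside a block (Lemma~\ref{lem:sign_vectors}). It then repairs the blocks of determinant $-1$ with the explicit factor $D=\prod_j (r_{c_j}^2)^{b_j}$ at the bad cuts.
\item You instead show that $\Phi$ has a descending neighbor at every $w\neq e$. At a descent, the two orientations give opposite values of $\Delta_a$, so one of them keeps $\nu$ fixed. When $\pi=e$, an ascent at a bad cut gives $\Delta\Phi=1-2$. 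Together with the $1$-Lipschitz property, this identifies $\Phi$ with the Cayley distance.
\end{itemize}
Your route is more economical. It reuses the lower-bound case analysis and needs neither the inversion-graph lemma nor the sign-vector lemma. It also proves directly that the greedy descent used in the benchmarks never gets stuck, which the paper only justifies a posteriori from the theorem itself. The paper's route, in turn, makes explicit that any reduced word for $\pi$ can be oriented and completed by $2\nu(w)$ generators placed at the bad cuts, which is the form reused for Theorem~\ref{th:optimal_two-qubit}. Your greedy actually has this property too: the descent positions may follow any reduced word, and the corrections land on the bad cuts of $w$, because existing cuts keep their $\Delta_c$ and newly created cuts are good. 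You would, however, have to say so.

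A few small repairs are needed.
\begin{itemize}
\item Your runtime sentence is not accurate as written. At a newly created cut $a$ of $w'=wg$ one has $\Delta_a(w')=\sgn(\pi'|_{[a]})\prod_{i\le a}\sgn(\tilde w'(i))$. So the prefix maximum and the running sign product do not determine it: you also need the parity of the prefix pattern, which is not $\OC(1)$ to maintain under adjacent swaps. The fix is easy. A descent swap never destroys a cut, so at most $2n-1$ cuts are created while sorting. Recomputing $\Delta_a$ in $\OC(n)$ at each creation keeps the total at $\OC(n^2)$, and in the diagonal phase the sign product alone suffices.
\item The parenthetical claim that $\det=+1$ forces $\Phi$ to be even is false, since $\Phi(r_1)=1$, although you never use it.
\item In the ascent case you should state explicitly that $a$ cannot become a cut, since that would require $\pi(a+1)\le a<\pi(a)$.
\end{itemize}
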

This theorem implies that the potential $\Phi(w)$ matches the optimal length to synthesize $w$ with the generating set $\RC$, and guarantees that a sequence of such length can be found efficiently on a classical computer. 
We recall that for an unsigned permutation $\pi$ and generating set $\GC$ given by nearest-neighbor transpositions on a path, it is well known that
$\ell_{\GC}(\pi) = \inv (\pi)$~\cite{bjorner2005combinatorics}. What we find here is that correctly assigning the signs to match those in $w$ using the gates in $\RC$ requires twice the number of negative defects extra operations.

\begin{figure}
    \centering
    \includegraphics[width=0.9\linewidth]{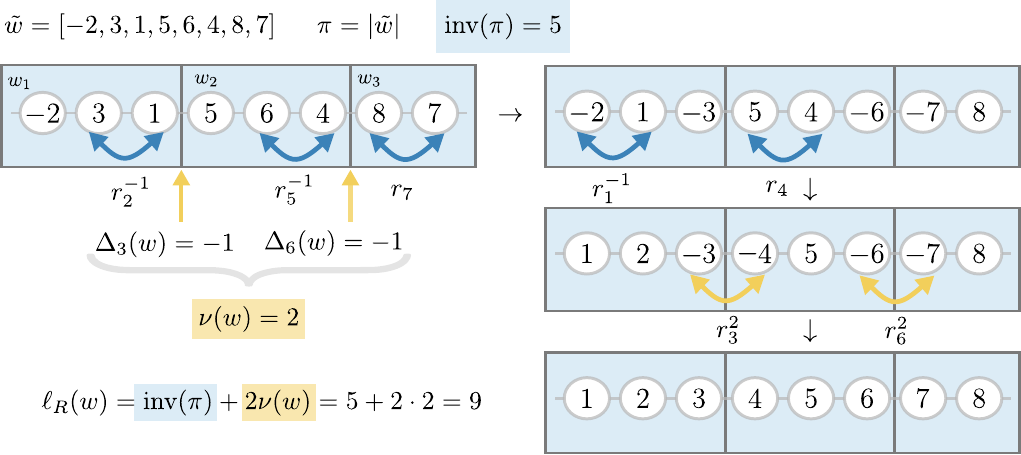}
    \caption{{\bf Example of an optimal synthesis with nearest-neighbor braids on an open path.} The number of gates from the generating set $\RC=\left\{r_a^{\pm 1}\right\}_{a=1}^{2n-1}$ required to synthesize  $w$ is $\ell _{\RC}(w)$, which splits into five inversions plus four gates to correct the sign defects captured by the number of bad cuts, $\nu(w)$. We first decompose the permutation matrix $w$ into three sub-permutation blocks $w_1, w_2$ and $w_3$. We then identify the bad cuts $c$, for which $\Delta_c = -1$. We use the inversions to synthesize the sub-permutation blocks,  leveraging the freedom in choosing $r_a$ or $r_a^{-1}$ to deposit the extra signs at the boundaries of the bad cuts, and finally use $r_c^2$ to correct them.}
    \label{fig:example_sorting}
\end{figure}

 The proof of Theorem~\ref{thm:path-optimal} consists of two steps (see Appendix~\ref{ap:proof_main_thm}). First, we prove that $\Phi(w)$ is indeed a potential by showing that a single application of a generator only changes $\Phi(w)$ by at most one, implying that $\ell_R(w)\geq \Phi(w)$. Then, we prove the converse inequality, namely $\ell_{\RC}(w)\leq \Phi(w)$, by explicitly providing an efficient algorithm that exactly matches  $\ell_R(w)= \Phi(w)$. A schematic explanation of how the algorithm works can be found in Fig.~\ref{fig:example_sorting}~\footnote{In the figure we use right multiplication for clarity, but the proof uses the equivalent left-multiplication convention.}.

When $\nu(w) = 0$, all the blocks $w_k$ have determinant $+1$ and our algorithm can correctly synthesize all the signs by using the freedom in picking either $r_a$ or $r_a^{-1}$ while sorting the blocks. On the contrary, when $\nu(w) \neq 0$, there are some blocks with $\det(w_k)=-1$. Since $r_a$ and $r_a^{-1}$ have positive determinant for all $a$, it is clear that the signs of those blocks with determinant $-1$  cannot be adjusted by using only generators that act within the block. Instead, we have to use $r_c^2$ across the cuts to correct the signs. To ensure that we only apply $2\nu(w)$ operations for these sign corrections, we place the signs at the boundaries of the blocks, such that we minimize the number of $r_c^2$ applications. This leads to the optimal $\Phi(w)$ number of total gates. For more details, see Appendix~\ref{ap:proof_main_thm}.

\subsubsection{Comparison against other compilers}
\label{sec:benchmark}
Next, we numerically benchmark our path synthesis algorithm against several alternative compilation strategies. The goal is to quantify the improvement obtained with the optimal approach relative to both matchgate-specific heuristics and general-purpose quantum compilers.

Specifically, we compare the algorithm given by Theorem~\ref{thm:path-optimal} against four alternative methods. Two of them take into account the signed-permutation structure. The first one is a sort-then-fix strategy based on bubble sort, which first realizes the underlying permutation and afterwards corrects the remaining signs. The second is a Givens-elimination strategy that operates directly on the signed-permutation matrix. This method is often used in the compilation of fermionic-shadow  settings~\cite{zhao2021fermionic,heyraud2024unified}. The general purpose compilers we employ are \texttt{Qiskit}~\cite{javadi2024quantum,qiskit} and \texttt{pytket}~\cite{sivarajah2020tket}, which synthesize the original circuits without exploiting their signed-permutation representation. 

To conduct the benchmark, we generate  matchgate-Clifford circuits as random walks of length $10n$ over the generating set $\mathcal R$.  These circuits define the target signed permutations to be synthesized. For more details, see Appendix~\ref{ap:numerics_path}. The results are shown in Fig~\ref{fig:benchmark}. In panel $(a)$, we plot the total number of gates in the compiled circuits up to $n=80$ qubits, while in panel $(b)$ the total number of two-qubit gates is depicted.
 For the total gate count, our compiler indeed produces  circuits that are shorter than all other alternatives. In particular, we find constant factor improvements ranging from $\sim 1.5$ over matchgate-dedicated methods to up to $\sim 2.6$ over general compilers. We also observe similar reductions in the number of two-qubit gates, although our algorithm is designed to minimize the total gate count and therefore does not guarantee optimality in the two-qubit count. We address this gap in the following section.

\begin{figure}[t]
\centering
\IfFileExists{figs/benchmark.pdf}%
  {\includegraphics[width=1\linewidth]{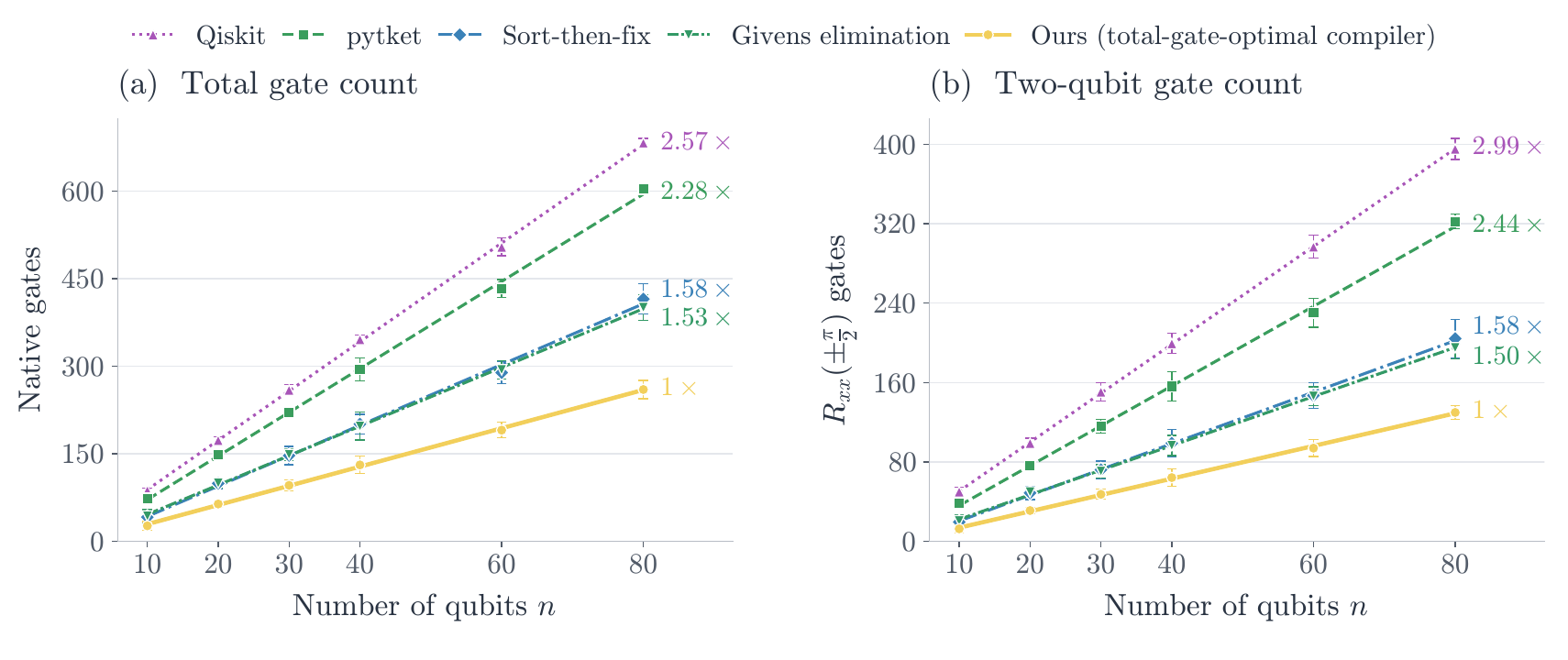}}%
  {\IfFileExists{benchmark.png}%
    {\includegraphics[width=0.6\linewidth]{benchmark.png}}%
    {\fbox{\parbox[c][3.4cm][c]{0.9\linewidth}{\centering\itshape
       Place \texttt{benchmark.pdf} or \texttt{benchmark.png} in this directory
       to display the benchmark plot here.}}}}
\caption{\textbf{Performance of our optimal path synthesis algorithm.}
Randomly generated Clifford-matchgate circuits are compiled using different strategies: our optimal compiler, a Givens elimination strategy, sort then fix, \texttt{pyket} and \texttt{Qiskit}. In panel $(a)$ we plot the average total gate count for different circuit sizes, and in panel $(b)$ we plot the average number of two-qubit gates.  Error bars represent the standard deviation. 
In both plots, we fit the data to straight lines and show the constant relative improvement achieved by our compiler over the other methods.}
\label{fig:benchmark}
\end{figure}

\subsubsection{Optimality in the number of two-qubit gates}\label{sec:two_qubits}
So far we have treated $S$ and $R_{xx}\left(\frac{\pi}{2}\right)$ gates as equally costly, since both are Clifford gates. However, in general two-qubit gates are significantly more noisy than single-qubit unitaries. Here, we investigate how much we can reduce the number of $R_{xx}\left(\frac{\pi}{2}\right)$ operations, or equivalently, the number of generators $r_a$ with even $a$ (see Eq.~\eqref{eq:ra_def}).

A useful tool for treating two-qubit gates differently is to group pairs of Majorana modes into physical modes, as
\begin{equation}\label{eq:physical_modes}
    \{1, 2\}, \{3,4\},\dots,\{2n-1, 2n\}\,. 
\end{equation}
The only generators that move Majoranas between physical modes are even-index generators. For each Majorana $\gamma_i$, we denote $s_i$  its initial mode and $t_i$ its target mode. For a boundary $\beta$ between modes
$\beta$ and $\beta+1$, we define
\begin{equation}
    F_\beta
    :=
    \#\{i:s_i\leq\beta<t_i\}\,,
\end{equation}
which is the number of Majoranas that start to the left of the boundary  $\beta$  and whose target position lies on the right of this boundary. By conservation, the same number $F_\beta$ must cross the boundary in the opposite direction as well. The total right (or left) flow over all boundaries is 
\begin{equation}\label{eq:rightward_flow}
F(\pi):=\sum_{\beta=1}^{n-1}F_\beta\,.
\end{equation}

In addition, we define $\nu_{\operatorname{even}}(w)$ as the number of bad cut points that lie at even positions,
\begin{equation}\label{eq:nu_even}
    \nu_{\operatorname{even}}(w)
    :=
    \#\left\{
        c\in\{1,\ldots,2n-1\}:
        \pi([c])=[c]\,,\;
        c\equiv0\;\ \bmod 2,\,\;
        \Delta_c(w)=-1
    \right\}\,.
\end{equation}
Then, as proven in Appendix~\ref{ap:proof_optimal_two_qubit}, the following result holds.

\begin{theorem}\label{th:optimal_two-qubit}
For every $w\in B_{2n} \cap \SO(2n)$ with underlying permutation $\pi=|w|$, the minimum number of two-qubit gates over all words realizing $w$,  $\#_2(w) $,  is given by 
\begin{equation}\label{eq:two_qubit_count}
    \#_2(w) = \mu(\pi) + 2\nu_{\operatorname{even}}(w)\,, 
\end{equation}
where $\mu(\pi)$ is  the minimum number of even-indexed generators appearing in a reduced word for $\pi$. 
The quantity $\mu(\pi)$ satisfies
\begin{equation}\label{eq:bounds_mu(pi)}
    F(\pi)
    \leq
    \mu(\pi)
    \leq
    2F(\pi)\,, \quad F(\pi) \leq \left\lfloor\frac{n^2}{2}\right\rfloor\,,
\end{equation}
where $F(\pi)$ is the net right (or left) flow of Majoranas over all physical boundaries. The upper bound is constructive and therefore gives a factor-$2$ approximation to $\mu(\pi)$.

\medskip

Moreover, there exists a word that simultaneously achieves the minimum total length
\begin{equation}
    \ell_R(w) = \inv(\pi)+2\nu(w)\,,
\end{equation}
as given by Theorem~\ref{thm:path-optimal}, and the minimum two-qubit gate count $\#_2(w)$.

\end{theorem}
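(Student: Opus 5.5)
The plan is to project words onto unsigned permutations, read off the two-qubit lower bound from that projection, and then build a single word that meets both optimal counts.

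\medskip

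\emph{Lower bound via projection.} Map every generator $r_a^{\pm1}$ to the adjacent transposition $s_a$. A word in $\RC$ realizing $w$ then projects to a (possibly non-reduced) word for $\pi=|w|$ in the transpositions $s_a$. By the deletion property of Coxeter groups, any word for $\pi$ contains a reduced word for $\pi$ as a subword, obtained by deleting letters in pairs $s_a,s_a$ of equal index. Deletion never creates letters, so:
\begin{itemize}
\item the number of even letters in the original word is at least the number of even letters in some reduced word, hence at least $\mu(\pi)$;
\item it exceeds that number by the even letters that were deleted.
\end{itemize}
Now let $c$ be an even bad cut. Because $\pi([c])=[c]$, no reduced word for $\pi$ contains $s_c$, since any reduced word only swaps inverted pairs. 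Suppose the original word contained no $r_c^{\pm1}$. Then every letter would preserve $\mathrm{span}(\vec e_1,\dots,\vec e_c)$. Each such letter restricts to this span with determinant $+1$, so $\Delta_c(w)=+1$, a contradiction. Hence $r_c$ occurs, all its occurrences must be deleted, and deletions come in pairs, so at least two copies are deleted. Summing over even bad cuts gives $\#_2(w)\ge\mu(\pi)+2\nu_{\rm even}(w)$.

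\medskip

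\emph{Upper bound and simultaneous optimality.} I would take a reduced word for $\pi$ with exactly $\mu(\pi)$ even letters and lift it to $\RC$. The step I expect to be the main obstacle is to generalize the sign-assignment argument of Theorem~\ref{thm:path-optimal} from the specific sorting order used there to an arbitrary reduced word. The claim to prove is the following. For any reduced word of $\pi$, the choices $r_a$ versus $r_a^{-1}$ at each letter can reproduce every sign pattern of $w$ inside each block $w_k$ of Eq.~\eqref{eq:permutation-blocks}, up to one residual sign per block. That residual sign can be pushed to the boundary Majorana of the block.

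I would attempt this by induction on the letters of the word. Flipping $r_a\leftrightarrow r_a^{-1}$ at a letter multiplies the current element by $-1$ on the two Majoranas touched by that letter. Propagating to the end, this flips two entries of the final signed permutation that share a connected component. The component graph of a block is connected, since the block is indecomposable in the sense of cut points. So these pairwise flips generate all even sign changes within each block, which is exactly what is needed.

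Each bad cut $c$ is then fixed by appending $r_c^2$. This multiplies the two boundary Majoranas $c,c+1$ by $-1$, which toggles $\Delta_c$ and no other $\Delta_{c'}$. The resulting word has length $\inv(\pi)+2\nu(w)$ and contains $\mu(\pi)+2\nu_{\rm even}(w)$ even letters, because $r_c^2$ is a pair of even letters exactly when $c$ is even. This gives both the equality in Eq.~\eqref{eq:two_qubit_count} and the simultaneous-optimality statement.

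\medskip

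\emph{Bounds on $\mu$.} For the lower bound, each letter $s_{2\beta}$ moves exactly one Majorana rightward across physical boundary $\beta$. At least $F_\beta$ Majoranas must make this crossing, so $\mu(\pi)\ge F(\pi)$.

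For the upper bound, I would construct a reduced word as follows:
\begin{enumerate}
\item Process the boundaries one at a time, bubble-sort style.
\item Before each crossing, use odd letters (which are free) to place the crossing Majorana at the inner edge of its mode.
\item Perform the crossing with one even letter.
\item Absorb the possible extra crossing back needed to stay reduced, giving at most $2F_\beta$ even letters per boundary.
\end{enumerate}
Reducedness of this word must be checked by confirming that only inverted pairs are ever swapped.

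Finally, $F_\beta\le\min(2\beta,2(n-\beta))$ counted in Majoranas, together with $F_\beta\le$ the number of Majoranas on the smaller side, bounds the sum by $\lfloor n^2/2\rfloor$. I would verify this cap by summing the triangular profile over $\beta$.
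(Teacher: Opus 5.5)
\textbf{Verdict: genuine gap.} Your treatment of $\#_2(w)=\mu(\pi)+2\nu_{\rm even}(w)$ and of simultaneous optimality follows the paper's route: project to $S_{2n}$, delete down to a reduced subword, then lift a reduced word with $\mu(\pi)$ even letters, fix orientations blockwise, and insert $r_c^2$ at the bad cuts. Your direct proof of $\mu(\pi)\ge F(\pi)$ is correct and simpler than the paper's detour through $2F-\xi$ and $\xi\le F$: each $s_{2\beta}$ carries exactly one Majorana rightward across $\beta$, and $F_\beta$ distinct Majoranas must make that crossing.

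\textbf{The gap: $\mu(\pi)\le 2F(\pi)$.} ``Process the boundaries one at a time, bubble-sort style \ldots\ absorb the possible extra crossing'' names the target count but gives no mechanism that bounds the extra crossings at $\beta$ by $F_\beta$. Handling one boundary requires moving Majoranas through other boundaries, and whether those crossings are wasted depends on what they are exchanged with. The bound also does not come for free from reducedness. Take $n=3$ and $\pi$ reversing positions $2,\dots,5$. Then $F_1=1$, yet the reduced word swapping at positions $2,3,2,4,3,2$ crosses boundary $1$ three times.

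What is missing is a specific routing rule with a charging argument, as in the paper:
\begin{enumerate}
\item Perform a productive exchange (both Majoranas move toward their target modes) whenever one exists.
\item Otherwise advance the rightmost right-moving Majorana from mode $b$ into $b+1$. Absence of a productive exchange and maximality of $b$ force both occupants of $b+1$ to be already home, so this exchange is neutral.
\item Let $q_\beta$ count the Majoranas currently left of $\beta$ whose target is right of $\beta$. It starts at $F_\beta$, ends at $0$, drops by one per productive exchange at $\beta$, and is unchanged by neutral ones. Hence there are exactly $F_\beta$ productive exchanges at $\beta$.
\item The rightmost-first rule ensures a displaced Majorana is never pushed further left. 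So each neutral exchange at $\beta$ is charged injectively to the later productive exchange that returns it, giving at most $F_\beta$ neutral ones.
\end{enumerate}
You also need not keep the word reduced while building it. The deletion property, which you already use, turns any word for $\pi$ into a reduced one without increasing the number of even letters.

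\textbf{A smaller error in your two-qubit lower bound.} The deletion property removes two letters that in general have different indices. For example, $(s_1s_2)^3=e$ contains $s_1$ three times, all of which must be deleted. So ``deletions come in pairs, so at least two copies are deleted'' does not follow. The conclusion is still true by the paper's argument. Write $\pi=A\,s_c\,B$ with $A,B$ free of $s_c$. Then $B$ fixes $[c]$ setwise, $s_c$ moves exactly one element of $[c]$ outside $[c]$, and $A$ preserves both $[c]$ and its complement. Hence $\pi([c])\neq[c]$, so a single occurrence of $s_c$ is impossible.

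\textbf{Two points to tighten in the sign step.}
\begin{itemize}
\item You assert that the swap graph of a cut-free block is connected; this needs an argument. Either cite the paper's inversion-graph lemma, or note that tokens in different components never swap, so they keep their relative order, which forces a cut.
\item The sign assignment is cleanest if you realize $u=Dw$ with $D=\prod_{c\ \text{bad}} r_c^2$, all of whose blocks have determinant $+1$, and then set $w=Du$. This avoids placing ``residual'' signs by hand, including for blocks lying between two bad cuts.
\end{itemize}
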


Let us discuss the implications of Theorem~\ref{th:optimal_two-qubit}. First, Eq.~\eqref{eq:two_qubit_count} indicates that, when minimizing the total number of two-qubit gates, we can split the contribution from the sign correction from the number of two-qubit gates to build the underlying permutation $\pi$.
The sign correction cost is fixed and given by $2\nu_{\operatorname{even}}(w)$. The remaining contribution depends only on the underlying permutation $\pi$ and is captured by $\mu(\pi)$, the minimum number of even-indexed generators among reduced words for $\pi$. Thus, the nontrivial part of minimizing the two-qubit gate count is determining $\mu(\pi)$.

By definition, a reduced word attaining $\mu(\pi)$ has total length $\inv(\pi)$. However, once the sign corrections required to realize the full signed permutation $w$ are included, it is not a priori obvious that minimizing the two-qubit gate count is compatible with minimizing the total synthesis length. Theorem~\ref{th:optimal_two-qubit} shows that these two objectives can in fact be achieved simultaneously.

The complexity in finding the optimal two-qubit gate count therefore lies in $\mu(\pi)$. Whether $\mu(\pi)$ can be computed, and whether a word attaining this quantity can be generated efficiently, remains open. However, we show that this quantity admits efficiently computable upper and lower bounds, given by Eq.~\eqref{eq:bounds_mu(pi)}. This gives a factor-$2$ approximation to the optimal quantity $\mu(\pi)$. 

Importantly, the upper bound for $\mu(\pi)$ is constructive, and therefore gives an approximation to the optimal two-qubit gate word.  For this construction, we build a word $W'$ for $\pi$ containing at most $2F(\pi)$ even-indexed generators by routing the Majoranas across the mode boundaries. Then, we use the Coxeter deletion property and remove a pair of generators without increasing the number of even-indexed generators. Therefore,
\begin{equation}
    E(W') \leq 2F(\pi) \leq 2\mu(\pi),
\end{equation}
where the last inequality follows from $\mu(\pi)\geq F(\pi)$. Hence the constructed word $W'$ provides a factor-$2$ approximation to the minimum number $\mu(\pi)$ of even-indexed generators. For more details on this routing, see Appendix~\ref{ap:proof_optimal_two_qubit}.

Although the Coxeter deletion property does not specify which generators should be deleted, we can find a redundant pair of generators by iterating over them. We delete such a pair when it does not change the underlying permutation, and we repeat until there are $\inv(\pi)$ generators left, which results in a reduced word. This procedure can be executed in polynomial time.  The argument does not extend to signed permutations directly, but since we are only interested in $\mu(\pi)$ and it only involves the underlying permutation, this is not an issue.

\subsection{Complete graph}
\label{sec:complete}
In this section, we consider a different connectivity graph and prove that optimal synthesis can be classically solved efficiently for generators with all-to-all connectivity. The generating set is given by 
\begin{equation}\label{eq:generating_all}
    \RC_{\mathrm{all}} :=
    \left\{
        r_{ab}\,,r_{ab}^{-1} : \;1\leq a<b\leq 2n
    \right\}\,.
\end{equation}
These generators act on the
two-dimensional subspace spanned by the canonical basis vectors $\vec{e}_a,\vec{e}_b$ as
\begin{equation}
    r_{ab} \vec{e}_a=-\vec{e}_b\,,
    \qquad
    r_{ab} \vec{e}_b=\vec{e}_a\,,
\end{equation}
while other basis elements are left unchanged (see Eq.~\eqref{eq:transpositions}). At the unitary level, $r_{ab}$ is realized by the rotation
\begin{equation}
    U_{ab} = \pm e^{\left(\frac{\pi}{4}\gamma_a\gamma_b\right)}\,.
\end{equation}
This unitary, for spatially separated modes, it is generally not a local two-qubit gate in the Jordan-Wigner representation, even though it admits a matchgate decomposition in terms of one- and two-qubit gates. This means that we are including non-local matchgates in our generating set. These operations may nevertheless be natively available in fermionic architectures. For instance, in reconfigurable arrays where fermionic modes can be exchanged with almost no overhead~\cite{gonzalez2023fermionic}.

The optimal synthesis problem is now, given the all-to-all generating set in Eq.~\eqref{eq:generating_all}, whether we can efficiently find optimal sequences for signed permutation matrices $w\in B_{2n}\cap \SO(2n)$. 
To answer this question, we need to take into account the cycle structure of the underlying permutation.  Consider the unsigned permutation $\pi = |\tilde w|$ and its cycle decomposition,
\begin{equation}
    \pi = C_1 C_2\dots C_{\kappa(\pi)}\,,
\end{equation}
where each $C_i$ is a cycle of the permutation $\pi$ and $\kappa(\pi)$ is the number of cycles, including fixed points. 
Now consider one cycle $C= (c_1, c_2, \dots, c_L)$ of length $L$. In the signed permutation $\tilde w$, there is a $\pm1$ sign for each element, 
\begin{equation}
  \tilde  w(c_1) = \epsilon_1 c_2\,, \quad \tilde w(c_2) = \epsilon_2 c_3\,, \quad \ldots\,. \quad \tilde w(c_L) = \epsilon_L c_1,\,
\end{equation}
with $\epsilon_j\in \{\pm 1\}$. We define the sign of the cycle as $S(C) = \prod_{j = 1}^L \epsilon_j$. The determinant associated to the cycle $C$ of the signed permutation $w$ is given by 
\begin{equation}\label{eq:determinant_cycle}
    \delta_w(C):= \det w[C;C] = (-1)^{L-1} S(C)\,,
\end{equation}
where $w[C;C]$ is the permutation matrix restricted to the cycle coordinates. 
We say that a cycle is defective whenever $\delta_w(C) = -1$. Equivalently, we have a defective cycle when $ S(C) = (-1)^L$.  Crucially, notice that the number of defective cycles must be even since the total determinant is equal to $+1$.

Then, as proven in Appendix~\ref{ap:proof-all-to-all}, the following result holds.
\begin{theorem}
\label{thm:all-to-all}
For every signed permutation $w\in B_{2n}\cap \SO(2n)$ with underlying permutation $\pi=|\tilde w|$, the minimum number of generators
from $\RC_{\mathrm{all}}$ required to synthesize $w$ is
\begin{equation}
    \ell_{\mathrm{all}}(w) = \bigl(2n-\kappa(\pi)\bigr)+d_-(w)\,,
    \label{eq:all-to-all-length}
\end{equation}
where $\kappa(\pi)$ the number of cycles in $\pi$ and $d_-(w)$ is the number of defective cycles. Moreover, $\ell_{\mathrm{all}}(w)\leq 2n$, and an optimal word is computable in $\mathcal \OC\left(n^2\right)$ time on a classical computer.
\end{theorem}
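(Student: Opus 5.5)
The plan is a potential-function argument, exactly parallel to the one used for Theorem~\ref{thm:path-optimal}. Define for every $w\in B_{2n}\cap\SO(2n)$
\begin{equation}
    \Psi(w):=\bigl(2n-\kappa(\pi)\bigr)+d_-(w)\,.
\end{equation}
For the lower bound $\ell_{\mathrm{all}}(w)\geq\Psi(w)$ we show that $|\Psi(r_{ab}^{\pm1}w)-\Psi(w)|\leq 1$ for every generator, and that $\Psi(\id)=0$.

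Multiplying by $r_{ab}^{\pm1}$ composes $\pi$ with the transposition $(a\,b)$. There are two cases.

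\emph{Case 1: $a,b$ lie in the same cycle $C$.} The cycle splits into two cycles $C',C''$, so $\kappa$ increases by one. Since the determinant of a block-diagonal restriction multiplies and $\det r_{ab}=+1$, we get $\delta(C')\delta(C'')=\delta(C)$. Hence the number of defective cycles among the pieces changes by $d\in\{-1,0,+1\}$; more precisely, if $C$ was defective then exactly one piece is defective, and if not then zero or two are. The change in $\Psi$ is therefore $-1+\{0,0,+2\}$, i.e.\ one of $-1,-1,+1$. In all cases $|\Delta\Psi|=1$.

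\emph{Case 2: $a,b$ lie in distinct cycles $C_1,C_2$.} The cycles merge, so $\kappa$ decreases by one. The merged determinant equals $\delta(C_1)\delta(C_2)$. Thus $d_-$ changes by $-2$ (both defective), $0$ (exactly one defective), or $0$ (neither defective), giving $\Delta\Psi\in\{-1,+1\}$.

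This case bookkeeping, in particular checking that the restricted determinant factorizes correctly under the left action when the cycle supports are rearranged, is the step I expect to need the most care. I would verify it through $\det w[C;C]=(-1)^{L-1}S(C)$ and track how a single signed transposition modifies the $\epsilon_j$ on the affected cycle(s).

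For the upper bound we synthesize cycle by cycle. A non-defective cycle $C$ of length $L$ is built with $L-1$ generators: sort it by transpositions $(c_1\,c_j)$ along the cycle, choosing $r$ versus $r^{-1}$ at each step to fix the sign of one element. The last sign is then forced by the determinant, and it is correct precisely because $\delta(C)=+1$.

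Defective cycles come in pairs, since $\det w=+1$. For a pair $C_1,C_2$ we spend one extra generator: first apply a generator $r_{ab}$ with $a\in C_1$, $b\in C_2$, which by Case 2 merges them into a single non-defective cycle of length $L_1+L_2$. We then build that merged cycle with $L_1+L_2-1$ generators as above, for a total of $L_1+L_2=(L_1-1)+(L_2-1)+2$ generators. Equivalently, we run the reduction backwards: from $w$, always apply a move with $\Delta\Psi=-1$. Such a move exists whenever $w\neq\id$. If some cycle of length $\geq2$ is defective or non-defective, splitting it appropriately works. If only defective fixed points remain, i.e.\ $\tilde w(a)=-a$, merging two of them lowers $d_-$ by two. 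Summing over cycles gives $\Psi(w)$ moves.

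The bound $\ell_{\mathrm{all}}(w)\leq 2n$ holds because $d_-\leq\kappa$.

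The running time is $\OC(n^2)$. Computing the cycles and their signs takes $\OC(n)$. Each of the at most $2n$ steps either reads off the next transposition explicitly from the cycle listing, or updates the signed permutation, at $\OC(n)$ cost per step.
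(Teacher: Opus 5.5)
Your proposal is correct and is essentially the paper's proof: the same potential $(2n-\kappa(\pi))+d_-(w)$, the same merge/split case analysis (using that $g$ acts inside the affected cycle support with $\det g[C';C']=1$, so $\Delta\Psi=\pm1$), the same greedy descent that splits nontrivial cycles with the appropriate orientation and then pairs off negative fixed points, and the same bound $2n$ from $d_-\leq\kappa$. Your first description, which merges each pair of defective cycles with one braid before sorting, is only a reordering of the paper's ``sort first, then apply $r_{ab}^2$ to paired negative fixed points'' and gives the same count; the one slip is the intermediate claim $d\in\{-1,0,+1\}$ in the split case, which should read $\Delta d_-\in\{0,+2\}$, as your own ``more precisely'' clause already shows.
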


\begin{figure}
    \centering
    \includegraphics[width=1\linewidth]{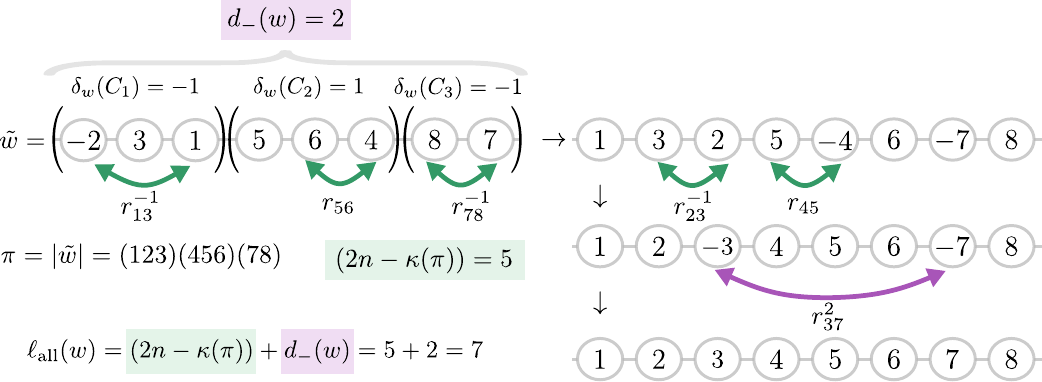}
    \caption{{\bf  Example of an optimal synthesis with all-to-all braids.}
    The underlying permutation $\pi=|\tilde w|$ is decomposed into its cycle structure, with $\kappa(\pi)=3$. Sorting these cycles requires $2n-\kappa(\pi)=5$ generators, as shown in green. The target contains $d_-(w)=2$ defective cycles, which require two additional generators for the sign correction. After the permutation has been sorted, the remaining sign defects are paired and removed by $r_{37}^2$, as shown in pink. This gives an optimal total length
$\ell_{\operatorname{all}}(w)=7$. For visual clarity, we use right multiplication convention.}
    \label{fig:example_all_to_all}
\end{figure}

The term $ \bigl(2n-\kappa(\pi)\bigr)$ in Eq.~\eqref{eq:all-to-all-length} is the cost associated to sorting the unsigned permutation $\pi$ using arbitrary swaps between Majoranas, corresponding to the classical transposition distance~\cite{diaconis1988group}. The number of residual signs that need to be corrected requires exactly $d_-(w)$ extra generators. This term plays an analogous role to $\nu(w)$ in Section~\ref{sec:path}. It takes into account the unavoidable sign defects that cannot be corrected without using generators that act non-trivially across cycles.

An example of the optimal sorting strategy is shown in Figure~\ref{fig:example_all_to_all}, where a signed permutation matrix $w$ associated to a $4$-qubit matchgate-Clifford unitary is brought to the identity by a sequence of gates (pink and green arrows). By inverting the sequence,  we obtain a word synthesizing $w$ with optimal length. The main idea is to express the underlying permutation $\pi = |\tilde w|$ in cycles and identify the defective ones (those with $\delta_w(C) = -1$). Then, we can readily sort the cycles using the minimum number of transpositions, employing the freedom in choosing between $r_{ab}$ and $r_{ab}^{-1}$ to correctly deposit the signs, up to the negative defects. Since the number of negative defects is even, we can pair the corresponding cycles and use two consecutive braids, $r_{ab}^2$, with $a$ in one defective cycle and $b$ in another, to correct all negative defects.

The closed-form expression for the optimal length also gives the diameter of the corresponding Cayley graph, that is, the maximum value of $\ell_{\mathrm{all}}(w)=2n-\kappa(\pi)+d_-(w) $ over all $w\in B_{2n}\cap\SO(2n)$. In particular,
\begin{equation}
   \ell_{\mathrm{all}} \leq 2n \,.
\end{equation}
This maximum  is attained by $-e\in  B_{2n}\cap\SO(2n)$.  Consequently, every matchgate-Clifford element can be synthesized using at most $2n$ all-to-all generators, and this bound is tight.

Finally an optimal synthesis sequence can be constructed in polynomial time. At each step, the algorithm applies a generator that decreases the distance to the identity by one. These operations require at most $\mathcal O(n)$ steps (identify defective cycles and sort the cycles and signs). Since the Cayley-graph diameter is $2n$, at most $2n$ of such iterations are required. The complete synthesis procedure can therefore be implemented in $\mathcal O\left(n^2\right)$ time.

\subsection{Tree graphs}
\label{sec:tree}

We now consider the case when the connectivity graph is a tree given as input to the synthesis problem (in other words, when the connectivity graph is an arbitrary tree). We prove the following result, see Appendix~\ref{ap:trees} for details.

\begin{theorem} \label{th:tree-npcomplete}
    The decision version of gate-optimal matchgate-Clifford synthesis with Majorana braids on tree connectivity graphs is NP-complete.
\end{theorem}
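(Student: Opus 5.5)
The proof has two parts: membership in NP, and NP-hardness by a reduction from token swapping on trees, which is NP-complete~\cite{aichholzer2021hardness}. In the reduction, the Majorana modes play the role of the tokens. The key observation links the two problems. Forgetting signs, a braid $r_{ab}^{\pm1}$ on an edge $(a,b)$ acts as the transposition of the tokens at $a$ and $b$. Moreover, $r_{ab}$ and $r_{ab}^{-1}$ differ only by flipping the signs of both Majoranas currently at positions $a$ and $b$. So a braid word on a tree $T$ projects to a swap sequence on $T$ of the same length, and choosing $r$ versus $r^{-1}$ at each step is a free pairwise sign flip.

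\emph{Membership in NP.} Given a tree $T$ on $[2n]$ and $w\in B_{2n}\cap\SO(2n)$, I will bound $\ell_T(w)$ by a polynomial.
\begin{itemize}[nosep]
\item Token swapping on any tree realizes any $\pi$ with $\OC(n^2)$ swaps, for instance by repeatedly routing the token of a leaf home and deleting that leaf.
\item Realize $|\tilde w|$ this way with braids. The product has determinant $+1$, so its sign pattern differs from that of $w$ by an even number of sign flips.
\item Correct each pair of wrong signs with $r_{e}^2$ moves. One can route a wrong-signed token along a path, apply $r_e^2$, and route it back, which costs $\OC(n)$ per pair.
\end{itemize}
This gives $\ell_T(w)=\OC(n^2)$, so a witness word can be checked by multiplying signed permutations in polynomial time, as noted in Section~\ref{sec:generators}.

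\emph{Hardness.} Given a token swapping instance $(T,\pi,k)$ on a tree, I will build $w$ with $|\tilde w|=\pi$ and choose signs so that every cycle of $\pi$ is non-defective, i.e.\ $\delta_w(C)=+1$ in the notation of Eq.~\eqref{eq:determinant_cycle}. Then $\det w=+1$, and I claim $\ell_T(w)=\mathrm{OPT}_{\rm swap}(T,\pi)$.
\begin{itemize}[nosep]
\item \emph{Lower bound.} Every braid word projects to a swap sequence of the same length realizing $\pi$.
\item \emph{Upper bound.} Take an optimal swap sequence and replace each swap by $r_{ab}$. Let the interaction graph have the tokens as vertices and an edge for each pair that is ever swapped. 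Each connected component is a union of cycles of $\pi$, since tokens outside a component never move relative to it.
\item Restricted to the coordinates of a component, every braid has determinant $+1$. So the sign pattern produced has the same parity on that component as $w$, which is the product of cycle determinants and hence $+1$.
\item Toggling the choice $r$ versus $r^{-1}$ on one swap flips the signs of exactly the two tokens swapped. Pair flips along the edges of a connected graph generate all even-size flip sets of its vertices. Hence some choice of orientations reproduces exactly the signs of $w$ on every component, with no extra gates.
\item Untouched tokens are fixed points with sign $+$, which is consistent with the chosen $w$.
\end{itemize}
The reduction is therefore $(T,\pi,k)\mapsto(T,w,k)$.

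\emph{Main obstacle.} The model requires an even number $2n$ of vertices, while hard token swapping instances may have odd order. My first attempt is to attach one new leaf carrying a fixed token with sign $+$. I would then invoke the known lemma for trees that some optimal swap sequence never touches a leaf whose token is already home. This shows the padding does not change the optimum for either problem; for the braid problem this follows from the lower and upper bound argument above. If that lemma is unavailable in the needed form, the fallback is to check that the hardness construction of~\cite{aichholzer2021hardness} can be adjusted to produce trees of even order.
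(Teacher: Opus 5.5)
Your core argument is the same as the paper's. The paper's $w^*(\pi)$ puts a single minus sign on the smallest element of each even cycle, which is exactly a signing that makes every cycle non-defective. The equality $\ell_T(w^*)=\mathrm{OPT}_{\rm swap}(T,\pi)$ then follows from the same projection lower bound and the same orientation argument on connected components of the swap multigraph (Lemma~\ref{lem:orientation}). Membership in NP uses the same leaf-by-leaf routing followed by pairwise sign repairs along tree paths; your route, apply $r_e^2$, route back costs the same $2d$ as the paper's product of $r^2$ along the path.

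The genuine gap is your handling of odd order. The ``known lemma'' you invoke, that some optimal swap sequence never touches a leaf whose token is already home, is Vaughan's happy-leaf conjecture, and it is false. Biniaz et al.\ (\emph{Token swapping on trees}) give a tree $T_0$ with a happy leaf $\ell$ such that every optimal swap sequence moves the token on $\ell$. Any sequence on $T_0-\ell$ is a sequence on $T_0$ that avoids $\ell$. So deleting $\ell$ from that example gives an instance with strictly larger optimum. In other words, attaching a home-token leaf can strictly decrease $\mathrm{OPT}_{\rm swap}$, and your padding $(T,\pi,k)\mapsto(T+\ell,w,k)$ can change the answer to the decision question. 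Your braid-side argument correctly gives $\ell_{T+\ell}(w)=\mathrm{OPT}_{\rm swap}(T+\ell,\pi)$, but nothing in the proposal relates this to $\mathrm{OPT}_{\rm swap}(T,\pi)$.

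So the first attempt fails and only your fallback remains. You would have to actually show that token swapping stays NP-hard on trees of even order, for example by finding a parity knob in the construction of~\cite{aichholzer2021hardness}, or avoid the mismatch some other way. The paper's proof does not address this point either: it simply takes the tree on $[2n]$ and reduces directly from token swapping on trees. Your observation is therefore legitimate, but the fix you propose does not close it.
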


This result showcases that the complexity of the optimal synthesis problem heavily depends on the chosen generating set. Indeed, the previous sections explained how optimal synthesis can be achieved with polynomial-time classical computation when the connectivity graph is an open path or a complete graph. In contrast, we here show that  when the graph is an arbitrary tree, the complexity of the problem jumps to NP-complete, for which no worse-case polynomial-time algorithm exist unless P=NP. The reduction is from the well-known routing problem, which is NP-complete on trees~\cite{aichholzer2021hardness} and asks to minimize the number of swaps in a given graph required to realize a prescribed permutation of the tokens placed on its vertices. This problem is very similar but not completely equivalent to optimal matchgate-Clifford synthesis, and the difference resides in the signs introduced by the Majorana braids. We prove that for every permutation, it is always possible to efficiently find a configuration of signs such that the minimum length of a braid sequence synthesizing the signed permutation is exactly the minimum number of swaps required for the original permutation. Consequently, if an algorithm solves the signed permutation synthesis problem, it also solves the routing problem. Importantly, notice that, for the reduction to work in this direction, we have to determine the sign configuration from the permutation itself, without prior knowledge of an  optimal routing. We furthermore provide a polynomial upper bound on the length of a sequence synthesizing a given unit-determinant signed permutation on a tree, which guarantees that the problem is indeed in NP and thus it is NP-complete.

In addition, we show that optimizing the depth in matchgate-Clifford circuit synthesis with tree connectivity is also NP-complete, again using the results in Ref.~\cite{aichholzer2021hardness} and the same idea as before to handle the signs (see Appendix~\ref{ap:trees} for a proof).

\begin{theorem} \label{th:tree-depth}
    The decision version of depth-optimal matchgate-Clifford synthesis with Majorana braids on tree connectivity graphs is NP-complete.
\end{theorem}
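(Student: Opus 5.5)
Membership in NP and NP-hardness are handled separately, and both build on the machinery already used for Theorem~\ref{th:tree-npcomplete}. The key object is the NP-hard problem from Ref.~\cite{aichholzer2021hardness}: parallel token swapping (routing number) on trees. It asks whether a permutation of tokens on the vertices of a tree can be realized in at most $k$ rounds, where each round is a matching of tree edges whose swaps are performed simultaneously. A layer of depth one in our setting is a product of braids $r_{ab}^{\pm1}$ on pairwise disjoint tree edges; these commute, so a layer is well defined. Forgetting signs, each layer maps to exactly one round of parallel swaps, and the depth of any braid circuit for $w$ is therefore at least the routing number of $\pi=|\tilde w|$.

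\textbf{Membership in NP.} I will use the polynomial upper bound from the proof of Theorem~\ref{th:tree-npcomplete}: every $w$ has a braid word of length $\mathrm{poly}(n)$ on a connected tree. Hence the minimal depth is also polynomial, and if the queried bound $k$ exceeds that word length the answer is trivially yes. A witness is a list of at most $k$ layers, each given as a set of disjoint edges with a choice of $\pm$ for every edge. It is verified by composing signed permutations in polynomial time.

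\textbf{NP-hardness.} Given a tree $T$, a permutation $\pi$ and a bound $k$, I construct in polynomial time a signed permutation $w$ with $|\tilde w|=\pi$ and $\det w=+1$ whose minimal braid depth equals the routing number of $\pi$. As in Theorem~\ref{th:tree-npcomplete}, the sign configuration must be computed from $\pi$ alone, without knowing an optimal schedule. The plan is to reuse exactly the sign assignment constructed there, together with its structural property: the signs that any routing schedule deposits on tokens can be steered to that fixed configuration using only the $\pm$ choice on each braid used, and this adds no gates.

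Since layers are simply groupings of swaps, an optimal parallel routing schedule, once lifted with appropriately chosen braid orientations, yields a braid circuit for $w$ of the same depth. One caveat needs checking: the sign-steering argument of Theorem~\ref{th:tree-npcomplete} may assume a minimum-length schedule, whereas a depth-optimal schedule need not be length-optimal. I would therefore argue directly that, for any swap sequence realizing $\pi$, the induced sign pattern can be adjusted freely, subject only to the determinant constraint, by flipping orientations. The reasoning is a per-edge choice argument: on a tree, each component of moves lets one fix signs token by token, leaving a global parity that is fixed by $\det=+1$. This direction, together with the lower bound from forgetting signs, gives depth$(w)$ $=$ routing number$(\pi)$.

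The main obstacle is showing that a sign configuration chosen from $\pi$ alone is reachable by every, or at least some depth-optimal, schedule without extra layers. If the arbitrary-sequence argument fails, my fallback is to modify the instance by attaching pendant leaves that act as sign sinks. This requires checking, via the tree-hardness construction of Ref.~\cite{aichholzer2021hardness}, that the added leaves do not change the routing number.
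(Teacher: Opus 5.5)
\textbf{Verdict: same approach as the paper, with a gap at the step you flag as the main obstacle.} Your reduction, your NP-membership argument and your sign-forgetting lower bound all match the paper. The reduction is from parallel token swapping on trees with target $w^*(\pi)$, which puts one negative sign on the smallest element of each even-length cycle. Membership uses depth $\le \ell_T(w)\le 3n(2n-1)$. The lower bound says a braid circuit of depth $d$ projects to a $d$-round routing; the paper leaves this implicit. Your caveat is not an issue: the paper's orientation lemma is proved for an \emph{arbitrary} swap sequence realizing $\pi$, so it applies verbatim to a depth-optimal schedule, and no pendant-leaf gadget is needed. The trouble is that your own replacement argument for this step is wrong.

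Flipping the orientation of one braid flips the signs of the two tokens it exchanges. So the reachable sign corrections are the vectors of even weight on \emph{each} connected component of the swap multigraph $H(S)$; tokens never swapped cannot change sign at all. That is one parity constraint per component, not a single global parity, and $\det w=+1$ does not imply it. For example, on the path $1$--$2$--$3$--$4$ take $\pi=(1\,2)(3\,4)$ with all signs positive. Then $\det w=+1$ and $\pi$ has routing number $1$. But the only one-round schedule is $\{(1,2),(3,4)\}$, and $r_{12}^{\pm1}r_{34}^{\pm1}$ always leaves exactly one negative sign on each pair, so this $w$ has braid depth at least $2$.

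What saves the reduction is the specific choice of $w^*$, and the missing idea is why it is compatible with every schedule:
\begin{itemize}
\item Each component $C$ of $H(S)$ is $\pi$-invariant, hence a union of cycles.
\item $\pi|_C$ is the product of the token transpositions labelling the edges of $C$.
\item Each braid, in either orientation, negates exactly one of its two tokens. So any orientation gives sign product $\sgn(\pi|_C)$ on $C$.
\item $w^*$ is built so that its sign product on every cycle $C'$ is $\sgn(C')$, hence also $\sgn(\pi|_C)$ on $C$.
\end{itemize}
The required correction therefore has even weight on every component. It is reachable by re-orienting braids along paths in $H(S)$, for any schedule whatsoever. Replace your determinant-only argument with this per-cycle balancing argument and your proof is complete.
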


\subsection{Connected graphs}\label{sec:connected}

A fairly direct corollary of Theorems~\ref{th:tree-npcomplete} and~\ref{th:tree-depth} is that when we are allowed to pick arbitrary Majorana braids as generators (equivalently, choose arbitrary connectivity graphs, that are assumed to be connected), the optimal synthesis problem in the matchgate-Clifford group is NP-complete. Since we have already shown that optimal matchgate-Clifford synthesis is NP-complete for tree connectivity, it suffices to provide a polynomial upper bound on the length of sequences synthesizing signed permutations on any graph in terms of Majorana braids. We do so in Appendix~\ref{ap:general-graphs}, and hence arrive at the following corollary.

\begin{corollary} \label{cor:connected-graphs}
    The decision version of gate-optimal and depth-optimal matchgate-Clifford synthesis with Majorana braids on general connectivity graphs is NP-complete.
\end{corollary}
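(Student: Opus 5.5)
The corollary has two halves, and I would treat them separately. NP-hardness is inherited directly from Theorems~\ref{th:tree-npcomplete} and~\ref{th:tree-depth}: every tree is a connected graph, so the tree-restricted problems are special cases of the general-graph problems (the identity map is the reduction). The real work is membership in NP. Following the discussion after Theorem~\ref{th:tree-npcomplete}, the target is to show that for every connected graph $G$ on the $2n$ Majorana vertices and every $w\in B_{2n}\cap\SO(2n)$, there is a braid word along edges of $G$ of length $\OC(\mathrm{poly}(n))$ synthesizing $w$. Then, for a query threshold $k$, we may assume $k$ is at most this bound (otherwise answer yes), and a witness word is checked by multiplying signed permutations in polynomial time. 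The same witness, grouped into layers of braids on disjoint edges, serves as a depth certificate. Since depth is at most gate count, the same bound covers the depth version.

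To construct the upper bound, I would fix a spanning tree $T$ of $G$ and work only with edges of $T$. First I route the underlying permutation $\pi$ by the standard leaf-peeling token-sorting procedure. Pick a leaf $v$ of $T$ and bring the token destined for $v$ to $v$ along the unique path, using at most $2n-1$ braids (each $r_{ab}^{\pm1}$ acts as a transposition on positions). Then delete $v$ and recurse. This gives at most $\OC(n^2)$ braids and realizes some signed permutation $w'$ with $|\tilde w'|=\pi$. Because every braid has determinant $+1$, both $w'$ and $w$ lie in $\SO(2n)$, so $w'^{-1}w$ is a diagonal $\pm1$ matrix with an even number of $-1$ entries.

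The remaining step, which I expect to be the main (mild) obstacle, is fixing these signs efficiently. For an edge $(a,b)$ we have $r_{ab}^2=-\id$ on the $(a,b)$ plane, so it flips the signs of exactly $a$ and $b$. To flip the signs of an arbitrary pair $a,b$ at distance $d$ in $T$, let $a=p_0,p_1,\dots,p_d=b$ be the tree path and apply $r_{p_ip_{i+1}}^2$ for every edge of the path. Interior vertices are flipped twice, so the net effect is to flip only $a$ and $b$. This costs at most $2(2n-1)$ braids, and the diagonal operators commute, so order does not matter. Pairing the $-1$ entries, of which there are at most $2n$, uses at most $n$ such corrections, i.e.\ $\OC(n^2)$ braids.

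Putting the two stages together gives $\ell_{G}(w)=\OC(n^2)$ for all connected $G$. All constructions are explicit and polynomial-time, although membership in NP only needs the length bound. This completes the NP membership, and combined with the hardness from trees yields NP-completeness for both the gate-optimal and depth-optimal versions.
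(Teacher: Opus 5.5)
Your proposal is correct and follows essentially the same route as the paper. Hardness is inherited from the tree theorems, and NP membership comes from restricting to a spanning tree, routing $\pi$ by leaf peeling with at most $\binom{2n}{2}$ braids, repairing the even-weight sign vector with at most $n$ path products of squared braids, and bounding depth by gate count; you state the resulting bound as $\OC(n^2)$ where the paper records the explicit $\ell_G(w)\leq 3n(2n-1)$.
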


\section{Discussion}
\label{sec:disc}

In this work, we have explored optimal exact synthesis within the matchgate-Clifford group. In particular, we have derived two closed-form expressions for the optimal synthesis length in terms of Majorana braids with path and complete-graph connectivity, together with efficient optimal synthesis algorithms. 
We have additionally benchmarked our path-synthesis algorithm against standard quantum compilers on chains of up to $n = 80$ qubits, and derived bounds on the minimum number of two-qubit gates. 
Moreover, we have shown that when an arbitrary tree connectivity graph is supplied as part of the input, the decision version of the optimal synthesis problem becomes NP-complete.

These results showcase that the complexity of optimal synthesis strongly depends  on how the generating set, and, more specifically, its connectivity, enter the problem. 
Identifying fixed connectivities that admit efficient optimal synthesis thus remains an interesting open question. Overall, our results help to better understand the synthesis of an important group of unitaries in quantum computing and may find application in compressing circuits for fermionic simulation. This topic has received some attention lately, and recent progress on fermionic routing has shown that polylogarithmic-depth implementations of fermionic-mode permutations can be obtained~\cite{maskara2025fast,constantinides2025low}, while $\OC(\log n)$ depth synthesis of matchgate-Clifford unitaries using general Clifford circuits under all-to-all qubit connectivity was proven asymptotically optimal~\cite{west2026strong}. Here, we have taken a complementary approach by pinning down the complexity in terms of the total number of gates needed for synthesis within the matchgate-Clifford group.

More broadly, this work constitutes a step towards understanding optimal synthesis for structured subgroups of the unitary group, where the complexity landscape still remains incomplete. 
Recently, both the decision version of optimal CNOT circuit synthesis under all-to-all connectivity and of depth-optimal CZ circuit synthesis were shown to be NP-complete~\cite{acuaviva2026cnot,li2026vanilla,keller2026variant}. In contrast to Clifford circuits, the complexity of optimal synthesis for the full matchgate group (either exact or approximate) remains unknown~\cite{casas2025matchgate}. Hence, determining how algebraic structure, generating sets, and hardware connectivity combine to make exact optimal synthesis tractable or intractable constitutes an interesting research direction for future work.

To conclude, we stress that although our main motivation comes from matchgate-Clifford synthesis, our results can also be viewed as statements about the shortest word length in signed permutation groups. Indeed, the adjoint representation of the matchgate-Clifford group is isomorphic to the unit-determinant signed permutation group, $B_{2n}\cap \SO(2n)$. Standard word-length formulas are known for Coxeter groups of type $B$ and $D$~\cite{bjorner2005combinatorics} and for related rotation subgroups~\cite{brenti2007alternating}, but they apply to different generating sets, not the Majorana-braid generators considered here. In particular, the standard Coxeter length functions do not capture the additional sign-correction term. Our results therefore provide exact word-length formulas for this physically motivated generating set.

\medskip

\paragraph*{Data and code availability.} The source code and benchmark scripts to reproduce Fig.~\ref{fig:benchmark}, together with the plotted raw data, are publicly available on GitHub at \url{https://github.com/yuxuanzhang1995/matchgate-clifford-synthesis}.

\paragraph*{Acknowledgments.}
We thank  Richard K\"ueng and Luna Lima Keller for insightful conversations on the complexity of circuit synthesis of restricted groups of unitaries, and Paolo Braccia for in-depth discussions on the synthesis of (Haar-random) matchgate-Clifford unitaries. B.C. received funding from the European Union, supported by the EuroHPC Joint Undertaking and its members under the Grant Agreement Nº 101159808, including top-up funding by Ministry for Digital Transformation and the Civil Service of the Spanish Government. B.C. was also supported by the Laboratory Directed Research and Development (LDRD) program of Los Alamos National Laboratory (LANL) under project number 20260043DR. D.G.-M. acknowledges financial support from the ERC via the Starting grant q-shadows (101117138) and from the Austrian Science Fund (FWF) via the SFB BeyondC (10.55776/FG7). Y.Z. acknowledges funding from the European Research Council (ERC) under the European Union's Horizon 2020 research and innovation programme (grant agreement No.~864597).

\paragraph*{AI assistance disclosure.} We acknowledge the use of ChatGPT 5.6 and Claude to prove the main results in the paper and help develop the code used to generate Fig.~\ref{fig:benchmark}. All scientific arguments, results, and conclusions were independently reviewed, verified and simplified by the authors.

\printbibliography

\appendix
\setcounter{theorem}{0}
\setcounter{lemma}{0}
\setcounter{proposition}{0}
\setcounter{corollary}{0}

\section{Proof of Theorem~\ref{thm:path-optimal}}\label{ap:proof_main_thm}
In this section, we present a proof of Theorem~\ref{thm:path-optimal}, which we restate here for clarity. 
\begin{theorem}\label{eq:phi(w)_theorem_app}
    Consider the generating set $\RC= \left\{r_a, r_a^{-1}\right\}_{a = 1}^{2n-1}$ and the potential
    \begin{equation}\label{eq:phi(w)_theorem}
        \Phi(w)=\inv(\pi)+2\,\nu(w)\,,\qquad
   \nu(w)=\#\{\,c\in[2n-1]:\Delta_c(w)=-1\,\}\,,
    \end{equation}
    with $\nu(w)$ the number of negative defects. Then, for every target $w\in  B_{2n} \cap \SO(2n)$, there is a classical algorithm that runs in $\OC(n^2)$ time and outputs a sequence of gates from $\RC$ that synthesizes $w$ with optimal length $\ell_{\RC}(w)=\Phi(w)$.
\end{theorem}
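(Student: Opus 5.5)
The argument has two halves. The lower bound $\ell_{\RC}(w)\geq\Phi(w)$ comes from showing that a single generator changes $\Phi$ by at most one. The upper bound comes from an explicit sorting algorithm that decreases $\Phi$ by exactly one per step until it reaches the identity, where $\Phi(e)=0$.

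\textbf{Lower bound.} Fix the left-multiplication convention $w\mapsto r_a^{\pm1}w$; this acts on rows $a,a+1$ by swapping them and flipping the sign of one of them. Write $\pi'$ for the new underlying permutation.
\begin{itemize}
\item The underlying permutation changes by an adjacent transposition, so $\inv(\pi')=\inv(\pi)\pm1$.
\item For $c\neq a$, the leading $c\times c$ submatrix is either untouched ($c<a$) or has two of its rows swapped with one sign flipped ($c>a$). The determinant is unchanged in both cases, so $\Delta_c$ is unchanged.
\item Only $\Delta_a$ can change, so $\nu$ changes by at most one.
\end{itemize}
The key case analysis concerns $c=a$:
\begin{itemize}
\item If $a$ was a cut point, then $\pi(a)\leq a<\pi(a+1)$ fails to be preserved: the move is an inversion-creating transposition across the cut, so $a$ stops being a cut point and $\inv$ increases by one.
\item If $a$ becomes a cut point, then $\inv$ must decrease by one.
\end{itemize}
So $\nu$ changes only when $a$ enters or leaves the cut-point set. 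Leaving can only increase $\inv$ (net $\Delta\Phi=+1-2$ or $+1$). Entering decreases $\inv$ (net $-1$ or $-1+2=+1$). In all cases $|\Delta\Phi|=1$. The delicate part is the parity bookkeeping: if $a$ was not a cut point and stays non-cut, nothing in $\nu$ changes. By induction on word length, $\ell_{\RC}(w)\geq\Phi(w)$.

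\textbf{Upper bound.} I would show that for every $w\neq e$ some generator decreases $\Phi$ by one.
\begin{itemize}
\item If some block $w_k$ in the decomposition of Eq.~\eqref{eq:permutation-blocks} is not a $1\times1$ block, pick an adjacent descent inside that block. Applying $r_a^{\pm1}$ removes the inversion and does not create a new cut point, unless it splits the block.
\item If it does split the block, the new cut $a$ gets determinant $\Delta_a$ equal to the product of earlier block determinants times the determinant of the left piece. The freedom $r_a$ versus $r_a^{-1}$ flips which row gets the sign and hence flips $\Delta_a$. Choose the option making $\Delta_a=+1$, so $\nu$ does not grow and $\Phi$ drops by one.
\end{itemize}
This is the main obstacle: one must verify that at least one descent always exists whose sign choice controls the new cut without disturbing the others. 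I expect the choice-of-sign argument above to handle it, since $\Delta_c$ for $c\neq a$ is invariant.

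If $\pi=e$, then $w$ is diagonal with an even number of $-1$s and $\nu>0$. Pick the first bad cut $c$: the entry at position $c$ is $-1$, or more generally the partial products flip there. Applying $r_c^{\pm1}$ creates one inversion and removes the bad cut, giving $\Delta\Phi=+1-2=-1$. Continuing gives the $r_c^2$ correction pairs described in the text.

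\textbf{Running time.} Maintain the prefix determinants $\Delta_c$ and the inversion data incrementally; each step costs $\OC(n)$. The total number of steps is $\Phi(w)\leq\binom{2n}{2}+2(2n-1)=\OC(n^2)$. An $\OC(n^2)$ total needs amortized $\OC(1)$ per step, which I would obtain by processing blocks via bubble-sort passes with sign choices fixed by a running prefix sign, followed by a single sweep correcting bad cuts. Reversing and inverting the sequence yields a word for $w$ of length exactly $\Phi(w)$.
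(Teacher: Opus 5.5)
Your proof is correct. The lower bound is essentially the paper's lower-bound lemma: every $\Delta_c$ with $c\neq a$ is invariant, and creating or destroying the cut at $a$ is always paired with $\Delta\inv=\mp1$. The upper bound, however, takes a genuinely different, local route.

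The paper builds the optimal word globally. It applies $r_c^2$ at every bad cut so that each cut-free block acquires determinant $+1$. It then uses connectivity of the inversion graph of a cut-free permutation to show that any reduced word for each block can be oriented to produce the required signs: flipping orientations along a path flips exactly the two endpoint signs.

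You instead show that every $w\neq e$ has a neighbour with $\Phi$ smaller by one. The decisive observation is that $r_a^{-1}w=r_a^2(r_aw)$ differs from $r_aw$ only by negating rows $a$ and $a+1$. So if a descent move creates a new cut at $a$, the two orientations give opposite $\Delta_a$, while every other $\Delta_c$ is unchanged; a bad cut can therefore always be avoided. The diagonal case ($r_c^{\pm1}$ at a bad cut gives $\Delta\Phi=+1-2$) and the fact that all blocks are $1\times1$ exactly when $\pi=e$ complete the argument.

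Your route is shorter and needs none of the inversion-graph machinery. It also directly certifies the greedy-descent compiler the paper actually runs in its benchmarks, which the paper justifies only a posteriori from $\ell_{\RC}=\Phi$. The paper's construction, in exchange, makes explicit the global structure ``any reduced word for $\pi$, suitably oriented, plus $r_c^2$ at each bad cut'', which its two-qubit-count theorem reuses. Your argument yields this too, since any left descent may be taken at each step, but you would need to say so.

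The running-time paragraph is the one under-justified part. $\OC(n)$ work per step over $\OC(n^2)$ steps gives only $\OC(n^3)$, and the ``running prefix sign'' is not explained. A simple repair:
\begin{itemize}
\item Detect a new cut at $a$ in $\OC(1)$ by maintaining the prefix maxima of $\pi^{-1}$; only the one at $a$ changes under a swap.
\item Descent moves never destroy cuts, so only $\OC(n)$ cuts are ever created, counting the diagonal correction phase.
\item Recompute $\Delta_a$ from scratch at each creation in $\OC(n)$.
\end{itemize}
This gives $\OC(n^2)$ in total, alongside the bubble-sort comparisons.

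Finally, under left multiplication the relevant descents are descents of values. The cut condition you wrote as $\pi(a)\leq a<\pi(a+1)$ should read $\pi^{-1}(a)\leq a<\pi^{-1}(a+1)$.
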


\begin{proof}
The proof consists  of two steps. First, we provide a lower bound on the length of a sequence synthesizing $w$. Then, we prove a matching upper bound on $\ell_{\RC}(w)$. 

We start by showing that $\Phi(w)$ is a lower bound on the word length. To do so, we show that multiplication by any generator changes the potential by at most one in absolute value.
    \begin{lemma}\label{lem:potential_lower_bound}
        $\Phi(gw)-\Phi(w)\in\{-1,+1\}$ for all $w\in B_{2n} \cap \SO(2n)$ and $g\in \RC$.
    \end{lemma}
    \begin{proof}
       We have to bound the change of both contributions in $\Phi(w)$ once we apply a given generator. Let us start with the change in the number of bad cuts. 
        For a cut $c$ and $g = r_a^{\pm 1}\in \RC$, we find that $\Delta_c(gw) = \Delta_c(w)$ for all $a\neq c$. We can split this into two cases: 
        \begin{itemize}
            \item $c<a$: In this case, $\Delta_c(w)$ does not change because $g$ does not affect the sub-matrix $w[1:c, 1:c]$. 
            \item $c> a$: $\Delta_c(w)$ contains both coordinates affected by $r_a^{\pm1}$. The generator $g$ acts entirely inside the first $c$ coordinates. Consequently, $gw[1:c, 1:c] = g[1:c, 1:c]\,w[1:c, 1:c]$ (which is not true in general).  Therefore, the determinant is unchanged, 
            \begin{align}
                \det((gw)[1:c, 1:c]) = & \det(w[1:c, 1:c] )\det(g[1:c, 1:c]) \nonumber \\=& \det(w[1:c, 1:c] )= \Delta_c(w)\,,
            \end{align}
            where we have used that $\det(g[1:c, 1:c]) = 1$. 
        \end{itemize}

        We can thus see that the value of the determinant at every pre-existing  cut $c\neq a$ in $w$ is unchanged. However, we must note that when $c>a$, new cuts can appear inside $w[1:c, 1:c]$. For instance, consider $w=r_1$, which has no bad cut. Then $r_1w=r_1^2$ has one bad cut. Importantly, a new cut can only appear at $a$, because for every $d\neq a$, the transposition $r_a^{\pm1}$ leaves $[d]$ invariant, and hence
\begin{equation}
   \pi'([d])=[d]
    \quad\Longleftrightarrow\quad
    \pi([d])=[d]\,,
\end{equation}
where $\pi'$ is the underlying permutation of $ r_a^{\pm 1} w$ and similarly $\pi=|\tilde w|$.

        We next focus on the case when $a = c$ for some cut. Here, the sub-matrix $w[1:c, 1:c]$ contains coordinate $a$ but not coordinate $a+1$. It is clear that a cut point at $c=a$ cannot be preserved by the action of $r_a^{\pm1}$, since the underlying permutation satisfies $\pi([a])=[a]$ and left multiplication by $r_a^{\pm1}$ exchanges the values $a$ and $a+1$ wherever they occur in the window notation (also adding an extra sign). Hence, the application of $r_a^{\pm 1}$ necessarily removes one cut point, merging two contiguous blocks $w_k$ and $w_{k+1}$ in $w$ (recall Eq.~\eqref{eq:permutation-blocks}). This in turn implies that the number of bad cuts in this case can only change by at most one. Consequently, we have
        \begin{equation}
            \nu(gw)- \nu(w)\in \{-1,0,1\}\,.
        \end{equation}

    For the inversion number, left multiplication by an adjacent transposition $r_a^{\pm1}$ changes the inversion number by exactly one,
    \begin{equation}
        \inv(|r_a^{\pm1} w|)-\inv(|w|)\in\{-1,+1\}\,.
    \end{equation}
 Therefore, we arrive at
\begin{equation}
    \Phi(r_a^{\pm1} w)-\Phi(w)\in\{\pm1,\pm3\}\,.
\end{equation}
We now show that $\pm3$ jumps cannot occur. Let us consider the two possible cases: 
\begin{enumerate}
    \item A negative defect at $a$ is destroyed,  so $\Delta\nu=-1$. This can only happen when $a=c$ for some cut $c$. Since $\pi([a])=[a]$, it follows that $\pi^{-1}(a)\leq a< a+1\leq \pi^{-1}(a+1)$ and the transposition between $\pi^{-1}(a)$ and $\pi^{-1}(a+1)$ creates an inversion, resulting in $ \Delta\inv=+1$. Hence
    \begin{equation}
        \Delta\Phi = \Delta\inv+2\Delta\nu = 1-2=-1\,.
    \end{equation}

    \item A negative defect at $a$ is created, so $\Delta\nu=+1$. Then $a$ is a cut point of $r_a^{\pm1}w$, while it was not a cut point of $w$. Since $\pi'([a])=[a]$, we have
\begin{equation}
    \pi\left([a]\right)
    =|\tilde r_a^{\pm1}| \left([a]\right)
    =\{1,\ldots,a-1,a+1\}\,,
\end{equation}
where $|\tilde r_a^{\pm1}|$ is the underlying permutation of $r_a^{\pm1}$ and we used $|r_a^{\pm1}|^2$ is the identity permutation $e$.
Hence
\begin{equation}
    \pi^{-1}(a+1)\leq a < \pi^{-1}(a)\,.
\end{equation}
Thus, this is an inversion of $\pi$. Left multiplication by
$r_a^{\pm1}$ exchanges the values $a$ and $a+1$, removing this inversion.
Since an adjacent transposition changes the inversion number by exactly one,
$\Delta\inv=-1$. Consequently,
\begin{equation}
    \Delta\Phi=\Delta\inv+2\Delta\nu=-1+2=+1\,.
\end{equation}

 \item If $\Delta\nu=0$, then $\Delta\Phi=\Delta\inv\in\{-1,+1\}$. 
     
\end{enumerate}

Thus, in every case,
\begin{equation}
    \Phi\left(r_a^{\pm1} w\right)-\Phi(w)\in\{-1,+1\}\,,
\end{equation}
which proves the lemma. 
All the cases are summarized in Table~\ref{tab:Th1}.
       
\end{proof}  

\begin{table}[t]
\centering
\begin{tabular}{c|c|c|c}
{\bf Changed induced at} $a$
& $\Delta\nu$
& $\Delta\inv$
& $\Delta\Phi$
\\
\hline
Negative defect destroyed
& $-1$
& $+1$
& $-1$
\\
Negative defect created
& $+1$
& $-1$
& $+1$
\\
No change in $\nu$, inversion created
& $0$
& $+1$
& $+1$
\\
No change in $\nu$, inversion destroyed
& $0$
& $-1$
& $-1$
\end{tabular}
\caption{Possible changes of the potential $\Phi$ under multiplication by a
generator $r_a^{\pm1}$.}\label{tab:Th1}
\end{table}

Since $\Phi(e)=0$ and each application of a generator changes the absolute value of $\Phi$ by at most one, any word over $\RC$ that reaches $w$ from the identity must have length at least $\Phi(w)$, i.e.,
\begin{equation}
    \ell_{\RC}(w)\geq \Phi(w)\,.
\end{equation}

\medskip

Now we continue with the upper bound, which is constructive. Namely, we provide an algorithm that builds  a word of length exactly $\Phi(w)$, which both certifies $\ell_{\RC}(w)\le\Phi(w)$ and is the optimal compiler we use throughout.

First, let us recall that a signed permutation matrix $w\in B_{2n}\cap \SO(2n)$ decomposes in the fixed canonical basis of $\mathbb{R}^{2n}$ at its cut points into cut-free blocks $w_1, \dots, w_t$, being $t$ the number of such blocks (see Eq.~\eqref{eq:permutation-blocks}).  A cut point of $w$ is a value $c$ such that $\pi([c]) = [c]$, where $\pi$ is the underlying permutation of $w$. Thus, we say that $w$ is cut-free if it has no non-trivial cut, that is, there is no $c\in[2n-1]$ such that $\pi([c])=[c]$. Similarly, we say that a block $w_k$ is cut-free if $w_k$ does not contain any non-trivial cut, i.e., the only cut it contains is located at the last row of the block. 

We define the inversion graph $\Gamma_w$ as the graph with vertex set $[2n]$, where two vertices $u<v$ are connected by an edge  if $\{u,v\}$ is an inversion of $w$. We prove the following lemma.

\begin{lemma}\label{le:connected_graph}

Let $\sigma$ be a permutation in $S_{m}$. Then, $\Gamma_{\sigma}$ is connected if and only if $\sigma$ is cut-free.
\end{lemma}

\begin{proof}
First, suppose that $\sigma$ has some non-trivial cut. Then, there exists
$c\in[m-1]$ such that
\begin{equation}
    \sigma([c])=[c]\,.
\end{equation}
Therefore, every value $u\leq c$ appears before every value
$v>c$ in $\sigma$. Since $u<v$, the pair $\{u,v\}$ is not an inversion. Hence, there are no edges in $\Gamma_\sigma$ connecting the sets $[c]$ and $\{c+1,\ldots,m\}$. 
Thus, $\Gamma_\sigma$ is disconnected.

Conversely, suppose that $\Gamma_\sigma$ is disconnected. Consider the connected component containing the value $1$, and let $c$ be the maximum value in this component such that $[c]$ is in the component. 
If $c=m$, then all vertices belong to the same connected component, in contradiction with the assumption that $\Gamma_\sigma$ is disconnected.
Therefore, $c<m$, and $c+1$ does not belong to the component. 
Since there is no edge between $c+1$ and any value in $[c]$, the value $c+1$ must appear after every value less than or equal to $c$ in $\sigma$.

Moreover, suppose that a value $d>c+1$ appears before a value $u\leq c$. If this happened, then $\{u,d\}$ would be an inversion and $d$ would belong to the same connected component as $u$. But since $d>c+1$ appears before $u$, and $u$ appears before $c+1$, the pair $\{c+1,d\}$ would also be an inversion, connecting $c+1$ to that same component, which is a contradiction.

Hence, every value in $[c]$ appears before every value larger than $c$, so
\begin{equation}
    \sigma([c])=[c].
\end{equation}
Thus, $\sigma$ has a non-trivial cut.

We conclude that $\Gamma_\sigma$ is connected if and only if $\sigma$ is cut-free.
\end{proof}

Figure~\ref{fig:inversion-graphs} illustrates the connection between the presence of non-trivial cuts and the connectivity of the inversion graph with two examples.

\begin{figure}[t]
    \centering

    %------------------------------------------------
    % (a) Decomposable permutation
    %------------------------------------------------
    \begin{minipage}[t]{0.47\textwidth}
        \centering
        \textbf{(a)} \qquad $\pi=[2,1,3,4,6,5]$

        \vspace{0.3cm}

        \begin{tikzpicture}[
            vertex/.style={
                circle,
                draw,
                thick,
                minimum size=8mm,
                inner sep=0pt
            },
            every path/.style={thick}
        ]

        \node[vertex] (1) at (0,0) {1};
        \node[vertex] (2) at (1.3,0) {2};
        \node[vertex] (3) at (2.6,0) {3};
        \node[vertex] (4) at (3.9,0) {4};
        \node[vertex] (5) at (5.2,0) {5};
        \node[vertex] (6) at (6.5,0) {6};

        \draw (1) -- (2);
        \draw (5) -- (6);

        \end{tikzpicture}
    \end{minipage}
    \hfill
    %------------------------------------------------
    % (b) cut-free permutation
    %------------------------------------------------
    \begin{minipage}[t]{0.47\textwidth}
        \centering
        \textbf{(b)} \qquad $\sigma=[3,1,5,2,6,4]$

        \vspace{0.3cm}

        \begin{tikzpicture}[
            vertex/.style={
                circle,
                draw,
                thick,
                minimum size=8mm,
                inner sep=0pt
            },
            every path/.style={thick}
        ]

        \node[vertex] (1) at (0,0) {1};
        \node[vertex] (3) at (1.3,0) {3};
        \node[vertex] (2) at (2.6,0) {2};
        \node[vertex] (5) at (3.9,0) {5};
        \node[vertex] (4) at (5.2,0) {4};
        \node[vertex] (6) at (6.5,0) {6};

        \draw (1) -- (3);
        \draw (3) -- (2);
        \draw (2) -- (5);
        \draw (5) -- (4);
        \draw (4) -- (6);

        \end{tikzpicture}
    \end{minipage}

    \caption{
   \textbf{ Inversion graphs for two permutations in $S_6$.}
    (a) The permutation $\pi$ has non-trivial cuts and its inversion graph is disconnected. (b) In contrast, $\sigma$ has no non-trivial cuts, and its inversion graph is connected.
    }
    \label{fig:inversion-graphs}
\end{figure}

Now consider a signed permutation $w\in B_{2n}\cap \SO(2n)$, with underlying permutation $\pi$. We can always write $w=D(\vec{\delta})\pi$, where $\vec{\delta}\in\mathbb{F}_2^{2n}$ is a binary vector and $D(\vec{\delta})= \operatorname{diag}\left((-1)^{\delta_1},\ldots,(-1)^{\delta_{2n}}\right)$ a diagonal matrix. For example, let  $w = [3, -1, 2, 4]$. Then $\vec{\delta}=(1,0,0,0)^T$ because the value ending on the first position has a minus sign, while the rest do not. With this, we prove the following lemma.

\begin{lemma}\label{lem:sign_vectors}
Consider a word over the generators  $r_a^{\pm1}$, with underlying permutation $\sigma\in S_{2n}$ and containing
exactly $\inv(\sigma)$ generators. Moreover, assume $\sigma$ is cut-free. Then, any sign vector  $\vec{\delta}$ satisfying
\begin{equation}
    \sum_{i=1}^{2n} \delta_i \equiv \inv(\sigma)
    \pmod 2\,,
\end{equation}
can be realized by changing the orientation of the generators.
\end{lemma}

\begin{proof}

Let us denote by $\vec{\delta}^\circ$ the sign vector associated to the fixed word over $r_a^{\pm1}$. Since the generators have determinant $+1$, Eq.~\eqref{eq:determinant} implies
\begin{equation} \label{eq-ap:parity}
    \sum_{i=1}^m \delta_i^\circ \equiv \inv(\sigma)  \pmod 2\,.
\end{equation}
Choosing between $r_a$ and $r_a^{-1}$ for a given generator in the word amounts to changing both signs of the exchanged Majorana operators (see Eq.~\eqref{eq:transpositions}). 
In terms of the sign vector,
the relative change is
\begin{equation}
    \vec{e}_{u} + \vec{e}_{v},
\end{equation}
where $u,v$ are the values exchanged by $r_a^{\pm1}$ and $\vec{e}_{u}, \vec{e}_{v}$ are canonical basis vectors of $\mathbb F_2^{2n}$. 
The pairs of Majoranas that are swapped in the word are precisely those corresponding to the inversion pairs of $\sigma$. Hence, every edge
$(u,v)$ of the inversion graph $\Gamma_\sigma$ corresponds to one generator in the word. 

Now consider two arbitrary values $u$ and $v$ which are not contiguous. Since $\sigma$ is
cut-free, Lemma~\ref{le:connected_graph} implies that
$\Gamma_\sigma$ is connected. Therefore, there exists a path from $u$ to $v$ in $\Gamma_\sigma$, given by the sequence of edges
\begin{equation}
    (u,v_1),\, (v_1,v_2),\, \ldots,\, (v_{k-1},v)\,.
\end{equation}
Every edge on this path corresponds to one generator in the word. By changing the orientation of all these generators, the resulting relative change in the sign
vector is
\begin{align}
    (\vec{e}_{u} + \vec{e}_{v_1}) + (\vec{e}_{v_1} + \vec{e}_{v_2}) +\cdots + (\vec{e}_{v_{k-1}} + \vec{e}_{v}) = \vec{e}_u + \vec{e}_v\,,
\end{align}
because every intermediate term appears twice and therefore cancels out modulo $2$. 
Thus, we can flip the final signs of any chosen pair of Majoranas $\gamma_u,\gamma_v$ without changing either the underlying permutation or the length of the
word.

Consequently, suppose $\vec{\delta}$ is the target sign vector we wish to implement. Since it must satisfy Eq.~\eqref{eq-ap:parity}, it must differ from $\vec{\delta}^\circ$ in an even number of entries. We can then pair the values whose
signs differ, and realize each pair flip using a path in $\Gamma_\sigma$ as above, without changing the length of the word.

\end{proof}

This lemma implies that, for a cut-free permutation $\sigma$, all sign vectors can be realized using $\inv(\sigma)$ generators, subject to the parity constraint imposed by $\inv(\sigma)$. However, the blocks $w_k$ with determinant $-1$ cannot be completely synthesized using only generators that act within the block, since the generators have determinant $+1$. In particular, given a word with underlying permutation $\pi$ and length $\inv(\pi)$, we need to correct the parity of its sign vector in all blocks with determinant $-1$, using generators that act across contiguous blocks. 

Let
\begin{equation}
    0<c_1<\cdots<c_{t-1}<c_t=2n
\end{equation}
be the cut points of a target $w\in B_{2n}\cap \SO(2n)$, where we have included the trivial endpoint
$c_t=2n$. These define the cut-free blocks
\begin{equation}
    B_j=\{c_{j-1}+1,\ldots,c_j\}\,,
    \qquad j=1,\ldots,t\,,
\end{equation}
with $c_0=0$. 
For convenience, we extend the definition of the cut determinant by
$\Delta_0(w)=1$. Notice also that $\Delta_{2n}(w)=\det(w)=1$.

For each $j\in [t-1]$, we define
\begin{equation}
    b_j =
    \begin{cases}
        1\,, & \Delta_{c_j}(w)=-1\,,\\
        0\,, & \Delta_{c_j}(w)=+1\,,
    \end{cases}
\end{equation}
and set $b_0=0$. Thus, the number of bad cuts is
\begin{equation}
    \nu(w)=\sum_{j=1}^{t-1} b_j\,.
\end{equation}
Since $w$ is block diagonal, we have
\begin{equation}
    \Delta_{c_j}(w)  =
    \prod_{k=1}^j \det\left(w_k\right)\,.
\end{equation}
Consequently,
\begin{equation} \label{eq-ap:detwj}
    \det\left(w_j\right) =
    \Delta_{c_{j-1}}(w)\Delta_{c_j}(w) =
    (-1)^{b_{j-1}+b_j}\,.
\end{equation}

To correct the signs of the block determinants, we use $r_c^2$ at every bad cut, i.e.,
\begin{equation}
    D :=  \prod_{j=1}^{t-1}
    \left(r_{c_j}^2\right)^{b_j}\,.
\end{equation}
Recall that $r_c^2$ leaves the underlying permutation unchanged and
flips the signs at $c$ and $c+1$. Therefore, on the block $B_j$,
the operator $D$ flips one sign if the cut $j-1$ is bad and another
sign if the cut $j$ is bad. Hence,
\begin{equation} \label{eq-ap:detD|Bj}
    \det\left(D[B_j;B_j]\right)  =  (-1)^{b_{j-1}+b_j}\,.
\end{equation}
Consider now
\begin{equation}
    u:= Dw\,.
\end{equation}
The underlying permutation of $u$ is still $\pi=|\tilde w|$, while for every
block,
\begin{equation}
    \det\left(u[B_j;B_j]\right)  = \det\left(w_j\right)
    \det\left(D[B_j;B_j]\right) =1\,,
\end{equation}
where we used Eqs.~\eqref{eq-ap:detwj} and~\eqref{eq-ap:detD|Bj}.

Each  block $u[B_j;B_j]$ is cut-free, and by Lemma~\ref{lem:sign_vectors}  can therefore be realized by
choosing the orientations of a word with underlying permutation $\pi|_{B_j}$ and length $\inv(\pi|_{B_j})$, where $\pi|_{B_j}$ is the restriction of $\pi$ to the set $B_j$. Since
there are no inversions between different cut-free blocks, concatenating
these words gives a word for $u$ of length
\begin{equation}
    \sum_{j=1}^t \inv(\pi|_{B_j})  = \inv(\pi)\,.
\end{equation}

Finally, using that $D^2=\id$, it follows that
\begin{equation}
    w=D u\,.
\end{equation}
There is one factor $r_c^2$ in $D$ for every bad cut, and each such
factor requires two generators from $\RC$. Therefore, this
construction realizes $w$ using
\begin{equation}
    \inv(\pi)+2\nu(w)
\end{equation}
generators, which proves
\begin{equation}
    \ell_{\mathcal{R}}(w) \leq \inv(\pi)+2\nu(w)\,.
\end{equation}
Combining this with the lower bound  established in Lemma~\ref{lem:potential_lower_bound}, we conclude that
\begin{equation}
    \ell_{\mathcal{R}}(w) = \inv(\pi)+2\nu(w) = \Phi(w)\,.
\end{equation}

The construction can moreover be implemented in $\OC\left(n^2\right)$ time on a classical computer. Indeed, the cut points  can be found in $\OC\left(n\right)$ time by scanning $\pi$. Likewise the bad cuts in $w$ can be identified from the parities and signs of the cut-free blocks in linear time. Sorting the underlying permutation of each block using the standard bubble-sort algorithm takes $\OC\left(n^2\right)$ time. Then, to determine the orientations of the braids we need the vector $\vec{\delta}$, which can be found in linear time from $\vec{\delta^\circ}$. Finally, the required changes in orientation needed to implement $\vec{\delta}$ can be done employing the procedure explained in Lemma~\ref{lem:sign_vectors}, by finding a spanning tree of the inversion graph and traversing it, which also takes linear time. Summing over all blocks, the total runtime is $\OC\left(n^2\right)$. 
And since there exist targets with $\ell_{\mathcal R}(w)=\Theta\left(n^2\right)$, this
worst-case runtime is asymptotically tight.

\end{proof}

\section{Details on numerical benchmarks}\label{ap:numerics_path}
In this section, we provide the details of the numerical benchmarks shown in Figure~\ref{fig:benchmark}.

We generate the targets as random walks of length $10n$ over the generating set $\mathcal R$ of Eq.~\eqref{eq:generators_path_garph}. At each step, with probability $\tfrac12$ we apply $S_q$ or $S_q^\dagger$, each with probability $\tfrac12$, to a uniformly random qubit $q$, which is the generator $r_{2q-1}^{\pm1}$; otherwise we apply $R_{xx}^{q,q+1}(\pm\pi/2)$, with a uniformly random sign, to a uniformly random adjacent pair, which is $r_{2q}^{\pm1}$. This ensemble is not uniform on $B_{2n}\cap\SO(2n)$, and the typical optimal gate count depends on it.

We use $n=10,20,30,40,60$ and $80$ qubits ($2n$ Majorana modes) with $10,10,8,8,6$ and $5$ targets, respectively, and the $s$-th target at each size is generated with seed $3000+s$. We use fewer targets at larger sizes. In the sampled targets, the relative range, $(\max-\min)/\mathrm{mean}$, decreases from approximately $83\%$ at $n=10$ to $15\%$ at $n=80$ for our total gate count, and from $96\%$ to $12\%$ for our $R_{xx}$ count. Each walk is stored as a circuit over $\{S,S^\dagger,R_{xx}(\pm\pi/2)\}$, and we read off its signed permutation $w$ from the action of the circuit on the Majorana operators, as dictated by Eq.~\eqref{eq:conjugation_majorana}. All compilers receive the same targets.

For our compiler we use a greedy descent on the potential $\Phi$ of Theorem~\ref{thm:path-optimal}, which returns words of  optimal length. Starting from $w$, it repeatedly applies a generator of $\mathcal R$ that lowers $\Phi$ by one  until it reaches the identity. When several such generators are available, we select the first in the fixed order $r_1,r_1^{-1},r_2,r_2^{-1},\ldots,r_{2n-1},r_{2n-1}^{-1}$, without any preference based on whether the corresponding gate is one- or two-qubit. Because $\ell_R(w)=\Phi(w)$, such a generator exists for every $w\neq e$ (undo the last letter of an optimal word for $w$), so the loop ends after exactly $\Phi(w)=\ell_R(w)$ steps. Thus, the compiler optimizes only the total gate count. The number of $R_{xx}$ gates obtained upper-bounds $\#_2(w)$, whose exact evaluation requires $\mu(\pi)$.

We compare against four methods: two compilers that act directly on signed permutations, the sort-then-fix heuristic and Givens elimination, and two production transpilers, Qiskit~\cite{javadi2024quantum,qiskit} and pytket~\cite{sivarajah2020tket}. The first two sort the permutation first and repair the signs afterwards, without using the choice between $r_a$ and $r_a^{-1}$ to place the signs during the sort. Sort-then-fix sorts $\pi$ by bubble sort, repeatedly swapping the leftmost adjacent pair that is out of order, always with the same one of $r_a$ and $r_a^{-1}$. It then pairs up the entries left with a negative sign, from left to right, and flips each pair $p<q$ back with $r_p^2 r_{p+1}^2\cdots r_{q-1}^2$, which changes only the signs at $p$ and $q$. The Givens-elimination compiler, the method used to compile measurement settings for fermionic classical shadows~\cite{zhao2021fermionic}, goes through the matrix representation $w$ one column at a time from left to right and rotates the nonzero entry of each column up to the diagonal with adjacent rotations, again always of the same orientation. This leaves a diagonal matrix of signs, which it clears in pairs in the same way.

The production transpilers receive the random-walk circuit itself. With Qiskit~2.1~\cite{javadi2024quantum,qiskit}, we run \texttt{transpile} with basis gates $\{S,S^\dagger,R_{xx}\}$ and a line coupling map at optimization levels 1 to 3 and keep the best verified output. We include the input circuit and all verified transpilation outputs as candidates, and select the candidate minimizing total native gate count, with $R_{xx}$ count as a secondary criterion. Failed transpilation attempts and outputs that do not pass verification are discarded. With pytket~2.9~\cite{sivarajah2020tket}, we apply \texttt{CommuteThroughMultis} followed by \texttt{RemoveRedundancies}. The first command commutes single-qubit gates through two-qubit gates, while the second one applies cancellations when possible. 

For verifying our compiler and the two transpilers, we check that the circuit contains only $S$, $S^\dagger$ and $R_{xx}$ gates and that its Clifford tableau equals that of the target, which fixes the unitary up to a global phase. The word returned by our compiler is first converted back into such a circuit. The outputs of sort-then-fix and Givens elimination are words over $\mathcal R$, and we check that they multiply out to $w$. Each $R_{xx}(\pm\pi/2)$ counts as one two-qubit gate. The pytket passes merge some pairs of adjacent $R_{xx}(\pi/2)$ gates into $R_{xx}(\pi)$, and the value we plot is pytket's own two-qubit count, in which such a gate counts twice.

In Fig,~\ref{fig:benchmark}, the plotted points show the mean gate count over the targets at each size, with error bars indicating one sample standard deviation above and below the mean. The lines are unweighted linear least-squares fits to these means, and the multiplicative annotations denote  $a_{\mathrm{baseline}}/a_{\mathrm{ours}}$, comparing the fit growth rates across the sampled range. For total gate count, these ratios are approximately $2.57$, $2.28$, $1.58$ and $1.53$ for Qiskit, pytket, sort-then-fix and Givens elimination, respectively. For $R_{xx}$ count, the corresponding ratios are $2.99$, $2.44$, $1.58$ and $1.50$.

The scripts that generate the targets and run all five compilers are publicly available online~\footnote{\url{https://github.com/yuxuanzhang1995/matchgate-clifford-synthesis}}.

\section{Proof of Theorem~\ref{th:optimal_two-qubit}}\label{ap:proof_optimal_two_qubit}

In this section, we prove the main results related to two-qubit gate count for path graph generators. We restate the results for clarity. 
\begin{theorem}\label{th:optimal_two-qubit_app}
For every $w\in B_{2n} \cap \SO(2n)$ with underlying permutation $\pi=|w|$, the minimum number of two-qubit gates over all words realizing $w$,  $\#_2(w) $,  is given by 
\begin{equation}\label{eq:two_qubit_count_app}
    \#_2(w) = \mu(\pi) + 2\nu_{\operatorname{even}}(w)\,, 
\end{equation}
where $\mu(\pi)$ is the minimum number of even-indexed generators appearing in a reduced word for $\pi$, and $ \nu_{\operatorname{even}}(w)$ is defined as in Eq.~\eqref{eq:nu_even}. 
The quantity $\mu(\pi)$ satisfies
\begin{equation}\label{eq:bounds_mu(pi)_app}
    F(\pi)
    \leq
    \mu(\pi)
    \leq
    2F(\pi)\,, \quad F(\pi) \leq \left\lfloor\frac{n^2}{2}\right\rfloor\,,
\end{equation}
where $F(\pi)$ is the net right (or left) flow of Majoranas over all physical boundaries. The upper bound is constructive and therefore gives a factor-$2$ approximation to $\mu(\pi)$.

Moreover, there exists a word that simultaneously achieves the minimum total length
\begin{equation}
    \ell_R(w)
    =
    \inv(\pi)+2\nu(w),
\end{equation}
as given by Theorem~\ref{thm:path-optimal}, and the minimum two-qubit gate count $\#_2(w)$.

\end{theorem}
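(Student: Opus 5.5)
The plan is to prove the two-qubit count formula by a lower bound and a matching construction that also has optimal total length. Then we treat the bounds on $\mu(\pi)$ separately.

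\emph{Lower bound.} Take any word $W$ over $\RC$ realizing $w$. Forgetting orientations and signs gives a (possibly non-reduced) word in adjacent transpositions for $\pi$. By the Coxeter deletion property, repeatedly deleting pairs of letters yields a reduced subword $W_0$ for $\pi$, and $W_0$ contains at least $\mu(\pi)$ even-indexed letters.

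The key observation concerns an even cut $c$ of $\pi$, i.e.\ $\pi([c])=[c]$. No reduced word for $\pi$ contains $r_c$, since it would create an inversion across the cut. Hence every occurrence of $r_c^{\pm1}$ in $W$ is among the deleted letters, and these are disjoint from the letters of $W_0$. It remains to show that a bad cut forces at least two such occurrences:
\begin{itemize}
\item If $W$ contains no $r_c^{\pm1}$, then $W$ factorizes into a word acting on $[c]$ times a word acting on its complement. Every generator restricted to its side has determinant $+1$, so $\Delta_c(w)=+1$.
\item The number of occurrences of $r_c^{\pm1}$ is even. Each application moves exactly one token from $[c]$ to the right, or one token into $[c]$ from the right, across the boundary at $c$. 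Since $\pi([c])=[c]$, the number of tokens outside $[c]$ must return to zero, so the crossings of the boundary pair up.
\end{itemize}
Therefore each even bad cut contributes at least $2$ even letters beyond those of $W_0$. Summing over such cuts gives $\#_2(w)\ge \mu(\pi)+2\nu_{\operatorname{even}}(w)$.

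\emph{Upper bound and simultaneity.} Choose a reduced word for $\pi$ with exactly $\mu(\pi)$ even letters. Being reduced, it uses no generator $r_c$ at a cut, so it splits into reduced words for the cut-free blocks $\pi|_{B_j}$. Now rerun the construction in the proof of Theorem~\ref{thm:path-optimal} with this particular word:
\begin{itemize}
\item Form $D=\prod_j (r_{c_j}^2)^{b_j}$ and the corrected target $u=Dw$.
\item By Lemma~\ref{lem:sign_vectors}, reorient the letters inside each block to realize the sign vector of $u$. This works for any reduced block word, not just bubble sort, and does not change which letters appear.
\item Output $Du$.
\end{itemize}
The total length is $\inv(\pi)+2\nu(w)$, which is optimal by Theorem~\ref{thm:path-optimal}. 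The even letters number $\mu(\pi)$ from the block words plus $2$ for each even bad cut from $D$. This gives equality in the formula and realizes both minima simultaneously.

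\emph{Bounds on $\mu(\pi)$.} For the lower bound, consider the physical boundary $\beta$ (between Majoranas $2\beta$ and $2\beta+1$). Only $r_{2\beta}$ moves tokens across it, and each application moves exactly one token rightward across it. At least $F_\beta$ tokens must cross rightward, so summing over $\beta$ gives $\mu(\pi)\ge F(\pi)$. The bound $F(\pi)\le\lfloor n^2/2\rfloor$ follows from $F_\beta\le 2\min(\beta,n-\beta)$, summed over $\beta$.

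The upper bound $\mu(\pi)\le 2F(\pi)$ is the step I expect to be the main obstacle. I would try a routing argument:
\begin{itemize}
\item Process the boundaries one at a time. At boundary $\beta$, use odd (intra-mode) generators freely to bring a token that must cross rightward and a token that must cross leftward adjacent to the boundary, then exchange them with one $r_{2\beta}$. This yields a word $W'$ for $\pi$ with at most $2F(\pi)$ even letters; the factor $2$ absorbs the case where the adjacent pair cannot be exchanged directly and a token must be shuttled within a mode first.
\item Reduce $W'$ by the deletion property. Deletions never increase the even count, and the deletable pairs can be found by exhaustive search in polynomial time.
\end{itemize}
The delicate point is to verify that the per-boundary scheduling needs at most two even crossings per unit of flow, without cascading extra crossings at neighbouring boundaries.
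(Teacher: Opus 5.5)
\textbf{Main gap: the bound $\mu(\pi)\le 2F(\pi)$.} This bound, and with it the claimed constructive factor-$2$ approximation, is not proved. Your boundary-by-boundary scheme does not work as stated.

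First, odd generators only reorder the two Majoranas inside a mode. They cannot bring a token that must cross boundary $\beta$ up to $\beta$ when it sits several modes further left. Getting it there costs even letters at other boundaries, which is exactly the cascading you leave unverified.

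Second, even when a right-mover sits in mode $\beta$, mode $\beta+1$ may contain no left-mover at all, because both of its tokens may already be home. Then no useful exchange exists, and some token already at its target must be pushed out. Nothing in your sketch bounds how often that happens.

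The missing idea, which the paper uses, is a global greedy rule with a charging argument:
\begin{itemize}
\item Perform a \emph{productive} exchange whenever one exists: a token with target right of $\beta$ crosses rightward while one with target left of $\beta$ crosses leftward.
\item Otherwise, take a right-mover $u$ in the rightmost mode $b$ containing one. Both tokens of mode $b+1$ must then have target exactly $b+1$: a left-mover there would give a productive exchange, and a right-mover would contradict the maximality of $b$. Hence $t_u>b+1$, and swapping $u$ with one of them is a \emph{neutral} exchange.
\end{itemize}
Let $q_\beta$ count the tokens currently left of $\beta$ whose target lies to its right. A productive exchange at $\beta$ lowers $q_\beta$ by one and a neutral one leaves it unchanged, so exactly $F_\beta$ productive exchanges occur at $\beta$.

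A token displaced by a neutral exchange at $\beta$ has target $\beta+1$. It can never be pushed further left, since neutral exchanges only displace tokens sitting in their target mode. It must therefore return through its own later productive exchange at $\beta$, so at most $F_\beta$ neutral exchanges occur there. This gives a word with at most $2F(\pi)$ even letters, and the deletion property then yields a reduced one.

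\textbf{Secondary issue: a false step in your lower bound.} The conclusion survives, but the claim that the number of occurrences of $r_c^{\pm1}$ is even when $\pi([c])=[c]$ is false. Each swap at $c$ moves one token in each direction, and both tokens may originate on the same side, so the crossings need not pair up. For example, for $c\le 2n-2$, $s_cs_{c+1}s_cs_{c+1}s_c=s_{c+1}$ contains $s_c$ three times and preserves $[c]$.

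What you actually need is to rule out exactly one occurrence, which is direct. If $W=A\,r_c^{\pm1}B$ with $A,B$ free of $r_c^{\pm1}$, then $|A|$ and $|B|$ preserve $[c]$, so $|W|([c])=|A|(\{1,\dots,c-1,c+1\})\neq[c]$.

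The remaining parts agree with the paper: the deletion argument, the simultaneous construction, and $F(\pi)\le\lfloor n^2/2\rfloor$. Your direct bound $\mu(\pi)\ge F(\pi)$, using that each $r_{2\beta}$ moves exactly one token rightward across $\beta$, is correct. It is simpler than the paper's route via $2F(\pi)-\xi(\pi)$ and $\xi(\pi)\le F(\pi)$.
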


\begin{proof}
We first prove Eq.~\eqref{eq:two_qubit_count_app}. For it, we prove lower and upper bounds separately. For the lower bound, consider any word $W$ over $\{r_a^{\pm1}\}$ that realizes $w$, and let
$E(W)$ denote the number of even-indexed generators in $W$. 

Ignoring the signs for a moment, the word $W$ projects to a word
$s_{a_L}\cdots s_{a_i}$ in adjacent transpositions that realizes $\pi$, where both $r_i$ and $r_i^{-1}$ project to the same adjacent transposition $s_i$. In other words, $s_i$ describes the permutation produced by $r_i^{\pm1}$ ignoring the signs.

If the word  $s_{a_L}\cdots s_{a_1}$ is not reduced, the Coxeter deletion property~\cite{bjorner2005combinatorics} allows us to successively delete pairs of letters until we obtain a reduced expression $V$ for the same permutation.
Since by definition every reduced word for $\pi$ contains at least $\mu(\pi)$ even-indexed transpositions, $W$ must contain at least $\mu(\pi)$ generators. Consequently 
\begin{equation}
    |W|\geq  E(W) \geq E(V)\geq \mu(\pi) .
\end{equation}

We now consider an even cut point $c$ of $\pi$ with $\Delta_c(w)=-1$, and consider how many occurrences of $r_a^{\pm1}$ are required for it to happen. The generator  $r_c^{\pm1}$ is the only one moving values across the boundary of the bad cut. Hence, if no $r_c^{\pm1}$ appears in $W$, $\Delta_c$ is necessarily  $+1$ as the two sides of the boundaries never interact.  If $r_c^{\pm1}$ appears only once, $c$ cannot be a cut as it mixes values $c$ and $c+1$ and $\pi([c])= [c]$ would not be satisfied. Hence, a bad cut at $c$ requires at least two occurrences of $r_c^{\pm1}$.

Moreover, because $c$ is a cut point of $\pi$, no reduced word for $\pi$ contains the transposition $s_c$. Indeed, $s_c$ exchanges the two values $c$ and $c+1$ and since $\pi([c])= [c]$, these two values do not form an inversion in $\pi$ and consequently do not cross in a reduced word. 

For even $c$, $s_c$ correspond to two-qubit gates. Thus, every bad even cut contributes at least two additional even-indexed generators beyond those contained in $V$. Summing over all bad even cuts,
\begin{equation}
    E(W)
    \geq
    E(V)+2\nu_{\operatorname{even}}(w).
\end{equation}
Since $V$ is reduced, $E(V)\geq\mu(\pi)$, and therefore
\begin{equation}
    E(W)
    \geq
    \mu(\pi)+2\nu_{\operatorname{even}}(w).
\end{equation}

Since this holds for every word realizing $w$,
\begin{equation}\label{eq:twoq_lower}
    \#_2(w)
    \geq
    \mu(\pi)
    +
    2\nu_{\operatorname{even}}(w).
\end{equation}

For the upper bound we construct a word $W$ for $w\in B_{2n}\cap \SO(2n)$ that uses only $\mu(\pi) + 2\nu_{\operatorname{even}}(w)$ two-qubit gates. The first step is to build a word  for $\pi$ that results in a signed permutation $u'$ such that $|u'| = |w|= \pi$. Among those words, choose one that uses only $\mu(\pi)$ two-qubit gates. Initially, for each transposition $s_{a_i} $ in the reduced word choose $s_{a_i} \mapsto r_{a_i}$. 

Since $u'$ is the signed permutation produced by a reduced word, it does not use any $r_c$ at a cut of $\pi$. This means that $\Delta_c (u') = +1$ for all cuts. However, it is possible that $w$ has some bad cuts. We apply $z_c:= r_c^2$ to $u'$ for all $c$ corresponding to a bad cut of $w$, which results in a permutation $u$ with 
\begin{equation}
    \Delta_c(u) = \Delta_c(w) 
\end{equation}
for all cuts in $w$. This costs exactly $2\nu_{\operatorname{even}}(w)$ two-qubit gates.

We now use this lemma: 
\begin{lemma}
    Consider two signed permutations $u, w\in B_{2n}\cap \SO(2n)$ such that $|\tilde u| = |\tilde w| = \pi$ and $\Delta_c(u) = \Delta_c(w)$ for all cut points $c$ of $\pi$. The sign mismatch between $u$ and $w$ has even parity on every cut-free block of $\pi$. 
\end{lemma}
\begin{proof}
    Consider the sign mismatch vector with components 
    \begin{equation}
        f_i = \begin{cases}
            1 ,\quad  \text{sign of value $i$ differs in $u$ and $w$}
            \\
            0, \quad \text{otherwise}. 
        \end{cases}
    \end{equation}
    We recall the definition of $\Delta_c(u)$: 
    \begin{equation}
        \Delta_c(u)  = \det (u[1\dots,c;1\dots,c]) = \sgn(\pi|_{[c]}) \prod_{i = 1}^c \sgn(\tilde u_i). 
    \end{equation}
    Since $u$ and $w$ have the same underlying permutation, the term $\sgn(\pi|_{[c]})$ is identical for both quantities $\Delta_c(u)$ and $\Delta_c(w)$. Moreover, because \(c\) is a cut point, the first \(c\) positions contain exactly the values in \([c]\). Therefore, 
    \begin{equation}
        \frac{\Delta_c(u)}{\Delta_c(w)}= (-1)^{\sum_{i = 1}^c f_i}.
    \end{equation}
     Since  $\Delta_c (u) = \Delta_c(w)$, we find that 
    \begin{equation}
        \sum_{i = 1}^c f_i \equiv 0 \bmod2.
    \end{equation}
    Next, consider that $\pi$ decomposes into $t$ cut-free blocks $B_1, \dots B_t$. We write the $j$-th block as
    \begin{equation}
        B_j = \{c_{j-1}+1, \dots c_j\},
    \end{equation}
    being $c_1, \dots c_{t-1}$ the cut points and setting $c_0 = 0$ and $c_t = 2n$. Since the two ends of the $j$-th block are cut points, we have 
    \begin{equation}
         \sum_{i = 1}^{c_j} f_i  =  \sum_{i = 1}^{c_{j-1}} f_i \equiv 0 \bmod2.
    \end{equation}
    Therefore, the number of sign defects inside each block satisfies 
    \begin{equation}
         \sum_{i \in B_j} f_i = \sum_{i = 1}^{c_j} f_i - \sum_{i = 1}^{c_{j-1}} f_i\equiv 0 \bmod2.
    \end{equation}
    This shows that for every block $B_j$, the sign mismatch has even parity, which proves the lemma. 
\end{proof}

To correct the even signs inside each block, we use the fact that the inversion graph of every cut-free block is connected (Lemma~\ref{le:connected_graph}). If we want to flip the signs of two Majoranas in the same block, we choose a path that connects them and apply the change $r_a\leftrightarrow r_a^{-1}$ to every crossing corresponding to an edge on the path. This effectively changes the signs of the two endpoints Majoranas. Importantly, this operation does not add any extra gate count and it is enough to correct all signs by pairing the remaining defects.

To sum up, the construction uses $2\nu_{\operatorname{even}}$ two-qubit gates to correct even bad cuts and $\mu(\pi)$ two-qubit gates  in the reduced word.  All remaining sign corrections do not introduce additional two-qubit gates. This proves that 
\begin{equation}\label{eq:twoq_upper}
    \#_2(w)\leq\mu(\pi)+ 2\nu_{\operatorname{even}}(w).
\end{equation}

Combining Eqs.~\eqref{eq:twoq_lower} and \eqref{eq:twoq_upper} gives
\begin{equation}
    \#_2(w) = \mu(\pi)+ 2\nu_{\operatorname{even}}(w).
\end{equation}

We now prove the simultaneous optimality in two-qubit gates and optimal length. Consider the construction used in the upper bound above. We start from a reduced word
for $\pi$ that contains exactly $\mu(\pi)$ even-indexed
transpositions. Since the word is reduced, its total length is
$\inv(\pi)$.

For every bad cut $c$ of $w$, we introduce the correction $z_c=r_c^2$. There are $\nu(w)$ bad cuts, and each correction adds exactly two generators. Hence, these corrections increase the total length by $2\nu(w)$. Among them, only the corrections associated with even cuts contribute to the two-qubit gate count. Therefore, they add exactly $ 2\nu_{\operatorname{even}}(w)$ two-qubit gates.

As shown in the proof of Theorem~\ref{th:optimal_two-qubit_app}, after these corrections the remaining sign mismatch has even parity within every cut-free block. It can therefore be corrected by changing
the orientations
\[
    r_a\leftrightarrow r_a^{-1}
\]
along the connected inversion graph of each block. These orientation changes do not modify either the total number of gates or the number
of two-qubit gates.

Consequently, the resulting word $W$ realizes $w$ and satisfies
\begin{equation}
    |W| 
    =
    \inv(\pi)+2\nu(w) = \ell_R(w),
\end{equation}
and simultaneously has a number of even-indexed generators 
\begin{equation}
    E(W)
    =
    \mu(\pi)+2\nu_{\operatorname{even}}(w)  =\#_2 .
\end{equation}

By Theorem~\ref{thm:path-optimal}, the first quantity is the minimum possible total length, and as proven above, the second is the minimum possible number of two-qubit gates. We conclude that same word $W$ simultaneously achieves both minima.

The exact evaluation of the routing term $\mu(\pi)$ and how to construct a word that achieves it remains open. However, we can derive the bounds given in Eq.~\eqref{eq:bounds_mu(pi)_app}. To do so, we use the following lemma:

\begin{lemma}
    For every $\pi\in S_{2n}$, the minimum number of even-indexed generators required for a reduced word for $\pi$ is bounded by
\begin{equation}
    2F(\pi)-\xi(\pi)
    \leq
    \mu(\pi)
    \leq
    2F(\pi),
\end{equation}
where $F(\pi)$ is the net right (or left) flow of Majoranas over all boundaries and $\xi(\pi)$ is the pairing number. 
Both bounds are computable in polynomial time, and the upper bound is constructive.
\end{lemma}

\begin{proof}
For the upper bound,  we first construct a sequence of generators realizing $\pi$ using at most $2F(\pi)$ even-indexed generators. To ease the proof, we track the inverse permutation $\rho = \pi ^{-1}$. If the current permutation $\sigma$ is updated by left multiplication, $\sigma \rightarrow s_a \sigma$,  then its inverse transforms as $\rho \rightarrow \rho s_a$. Thus, in the window notation of $\rho$, the generator $r_a$ exchanges the entries at positions $a$ and $a+1$. 

Define physical modes as bins containing two Majoranas as
\begin{equation}
    \{1,2\}, \{3,4\}\dots \{2n-1,2n\}
\end{equation}
We consider each mode as a bin, and label every Majorana $i$ by its target mode $t_i$, according to the permutation $w$. Odd-indexed generators only exchange Majoranas within the same bin, whereas even-indexed generators exchange Majoranas across two neighboring bins.

A Majorana is said to be a right-moving if its target mode (dictated by $\rho$) lies in the right, in other words, its source mode $s_i = \left\lceil i/2\right\rceil$ is smaller than its target mode $t_i = \left\lceil\pi(i)/2\right\rceil$. Equivalently, a left-moving Majorana satisfies $s_i>t_i$.

Consider a boundary $\beta$ between modes $\beta$ and $\beta+1$. We consider two types of exchange across this boundary: 
\begin{itemize}
    \item Productive exchange: if it moves a Majorana whose target lies to the right of $\beta$ from left to right and, simultaneously, moves a Majorana whose target lies to the left of $\beta$ from right to left. Thus, an exchange is productive if both Majoranas move towards their target modes.
    \item Neutral exchange: if a right-moving Majorana needs to get through the next bin and it forces to move the Majoranas from bin $\beta+1$ (who actually belonged there) to bin $\beta$. In other words, we are temporarily displacing the Majoranas in bin $\beta+1$.
\end{itemize}
We use the following routing procedure. Whenever a productive exchange is available, we perform one. If no productive exchange is available, we choose a right-moving Majorana $u$ whose current mode $b$ is as far right as possible, and exchange it with one of the Majoranas in mode $b+1$.

We first show that this second type of exchange is always possible and is a neutral exchange. Since $u$ is right-moving,
\begin{equation}
    t_u>b.
\end{equation}
Because no productive exchange is available, both Majoranas currently in mode $b+1$ must have target mode at least $b+1$. Otherwise, one of them would need to move to the left and could be productively exchanged with $u$.

Moreover, neither Majorana in mode $b+1$ can have target mode strictly larger than $b+1$. If this happened, that Majorana would itself be a right-mover located in a mode to the right of $b$, contradicting the maximality of $b$ used in the routing procedure. Hence both Majoranas in mode $b+1$ have target mode exactly $b+1$.

Since there are exactly two Majoranas whose target is mode $b+1$, we must have
\begin{equation}
    t_u>b+1.
\end{equation}
Therefore, exchanging $u$ with one of the Majoranas in mode $b+1$ moves $u$ one mode towards its target, while temporarily displacing a Majorana from its target mode. Therefore, this corresponds to a neutral exchange. 

We now bound the number of productive and neutral exchanges. For each boundary $\beta$, define
\begin{equation}
    q_\beta
    :=
    \#\left\{
        i:
        \text{$i$ is currently in a mode at most $\beta$ and $t_i>\beta$}
    \right\}.
\end{equation}
Initially, $  q_\beta=F_\beta$. We recall that 
\begin{equation}
    F_\beta
    :=
    \#\{i:s_i\leq\beta<t_i\}
\end{equation}
is the number of right-moving Majoranas across boundary $\beta$. At the end of the routing $ q_\beta=0$, as we have moved all the modes in their target position. 

A productive exchange across boundary $\beta$ decreases $q_\beta$ by one: a Majorana whose target lies to the right leaves the left side, while the Majorana entering the left side has its target there. Therefore,
\begin{equation}
    q_\beta\longmapsto q_\beta-1.
\end{equation}

By contrast, a neutral exchange leaves $q_\beta$ unchanged. A
right-moving Majorana leaves the left side, decreasing $q_\beta$ by one, but a Majorana whose target is mode $\beta+1$ is simultaneously
displaced to the left side, increasing $q_\beta$ by one. Hence,
\begin{equation}
    q_\beta\longmapsto q_\beta.
\end{equation}
Exchanges across other boundaries do not change $q_\beta$.

It follows that exactly $F_\beta$ productive exchanges must occur across boundary $\beta$. Denote this number by $P_\beta$:
\begin{equation}
    P_\beta=F_\beta.
\end{equation}

Let $N_\beta$ denote the number of neutral exchanges across the same boundary. Consider one such exchange. It displaces a Majorana $v$ whose target mode is $\beta+1$ from mode $\beta+1$ into mode $\beta$. This Majorana cannot subsequently move farther to the left under the routing rule. Since it must eventually return to its target mode, it must later cross boundary $\beta$ from left to right.

After this exchange, $v$ is a right-moving Majorana. We claim that $v$ cannot subsequently move farther to the left. A productive exchange cannot do so, since this would move $v$ away from its target mode.
Moreover, $v$ cannot be displaced to the left by another neutral exchange. Indeed, neutral exchanges are performed only when no
productive exchange is available, in which case the algorithm chooses a right-moving Majorana whose current mode is as far right as possible. Since $v$ is already right-moving in mode $\beta$, a Majorana in mode
$\beta-1$ cannot be selected to displace it further to the left, as this would contradict the routing rule. 

Consequently, we associate each neutral exchange with the later productive exchange that returns its displaced Majorana to mode $\beta+1$. Different neutral exchanges must be associated with different productive exchanges, because a single productive swap moves only one Majorana from left to right and therefore can return at most one displaced Majorana. Hence the association is injective, and
\begin{equation}
    N_\beta\leq P_\beta=F_\beta.
\end{equation}

Taking this into account, the total number of inter-mode exchanges across boundary
$\beta$ is at most
\begin{equation}
    P_\beta+N_\beta
    \leq
    2F_\beta.
\end{equation}
Summing over all boundaries gives
\begin{equation}
    \sum_{\beta=1}^{n-1}(P_\beta+N_\beta)
    \leq
    2\sum_{\beta=1}^{n-1}F_\beta
    =
    2F(\pi).
\end{equation}
This routing gives a left-multiplication synthesis of $\pi$, suppose that the routing of $\rho=\pi^{-1}$ uses the generators $s_{a_1},\ldots,s_{a_L}$ in this order. Since the routing acts by right multiplication, $\rho=s_{a_1}\cdots s_{a_L}$. Taking inverses and using $s_a^{-1}=s_a$, we obtain $\pi=\rho^{-1}=s_{a_L}\cdots s_{a_1}$.
Therefore, applying the same generators in the temporal order $s_{a_1},\ldots,s_{a_L}$ by left multiplication produces $\pi$. 

The resulting word does not need to be reduced. However, by the Coxeter deletion property, we may delete letters until obtaining a reduced word for the same permutation, without increasing the number of even-indexed generators. Therefore,
\begin{equation}
    \mu(\pi)\leq2F(\pi).
\end{equation}

We now prove the lower bound 
\begin{equation}
    \mu(\pi)\geq2F(\pi)-\xi(\pi).
\end{equation}
Choose a reduced word $V$ for $\pi$ that achieves $\mu(\pi)$. As in the upper-bound argument, we track the inverse permutation $\rho=\pi^{-1}$. The same temporal sequence of generators defining $V$ gives a reduced positional routing of $\rho$. For each mode boundary $\beta$, let $x_\beta$ denote the number of swaps across that boundary in this routing. Since every even-indexed generator crosses exactly one mode boundary,
\begin{equation}
    \mu(\pi)
    =
    \sum_{\beta=1}^{n-1}x_\beta.
\end{equation}

Consider a boundary $\beta$. There are $F_\beta$ Majoranas that start to the left of $\beta$ and end to its right. For each such Majorana, consider its last crossing of $\beta$. Since the Majorana finishes on the right, this last crossing must be from left to right. We call these the $R$-essential crossings. Thus, there are exactly $F_\beta$ $R$-essential crossings.

By conservation, there are also $F_\beta$ Majoranas that start to the right of $\beta$ and end to its left. Their last crossings are necessarily from right to left. We call these the $L$-essential crossings. Again, there are exactly $F_\beta$ of them.

These essential crossings consider one unavoidable crossing for every Majorana that must change sides of the boundary. Other swaps across $\beta$ may also occur, for example crossings that are not the last crossing of either Majorana. A single swap across $\beta$ may be both $R$-essential and $L$-essential, realizing one required crossing in each direction simultaneously. 

Let $M_\beta$ denote the number of such doubly essential swaps. The union of the $R$-essential and $L$-essential crossings therefore contains
\begin{equation}
    2F_\beta-M_\beta
\end{equation}
swaps. Since all of these are among the $x_\beta$ swaps performed across boundary $\beta$,
\begin{equation}
    x_\beta
    \geq
    2F_\beta-M_\beta.
\end{equation}

Summing over all boundaries gives
\begin{align}
    \mu(\pi)
    =
    \sum_{\beta=1}^{n-1}x_\beta
    \geq
    2\sum_{\beta=1}^{n-1}F_\beta
    -
    \sum_{\beta=1}^{n-1}M_\beta
    =
    2F(\pi)-M,
\end{align}
where we used
\begin{equation}
    M:=\sum_{\beta=1}^{n-1}M_\beta.
\end{equation}

It remains to upper bound $M$. For it, consider the sets 
\begin{equation}
    \mathcal R = \{i : t_i>s_i\}, \quad \mathcal L =  \{i : t_i<s_i\}.
\end{equation}
We now construct a bipartite graph, with nodes $\mathcal R$ in one partition and $\mathcal L$ in the other. We connect two vertices $(r,l)$ between $\mathcal R$ and $\mathcal L$ with $r\in \mathcal R$ and $l\in \mathcal L$ if
    \begin{equation}
    s_r\leq\beta<t_r,
    \qquad
    t_\ell\leq\beta<s_\ell \quad \text{for some boundary $\beta$.}
\end{equation}
This condition implies that $r$ and $l$ must form an inversion of $\pi$. These edges represent the possible Majoranas that can have a double essential swap. 

Since each Majorana $r$ can participate in up to $|t_r-s_r|$ doubly essential crossings, the vertex $r$ gets capacity at most $|t_r-s_r|$. Similarly, vertex $l$ gets capacity $|s_l-t_l|$. All edges $(r,l)$ have capacity one, as in the corresponding reduced routing of $\rho$, a given pair of Majoranas can cross at most once.

Given this bipartite graph let $\xi(\pi)$ denote the maximum number of edges that can be selected subject to the constraints that each edge is selected at most once and each vertex $i$ is incident to at most $|t_i-s_i|$ selected edges.
Therefore, the doubly essential swaps define a feasible
degree-constrained matching in the grap. By maximality of $\xi(\pi)$,
\begin{equation}
    M\leq\xi(\pi).
\end{equation}
Substituting this into the previous inequality gives
\begin{equation}
    \mu(\pi)
    \geq
    2F(\pi)-\xi(\pi).
\end{equation}
Finally, combining the upper and lower bounds gives
\begin{equation}
    2F(\pi)-\xi(\pi)
    \leq
    \mu(\pi)
    \leq
    2F(\pi).
\end{equation}
\end{proof}

Moreover, since the total capacity of the right-moving vertices is
\begin{equation}
    \sum_{r\in \mathcal R}(t_r-s_r)=F(\pi),
\end{equation}
the number of selected edges in any feasible matching is at most $F(\pi)$. Therefore, 
\begin{equation}
    \xi(\pi)\leq F(\pi),
\end{equation}
which results in 
\begin{equation}
    F(\pi)
    \leq
    \mu(\pi)
    \leq
    2F(\pi).
\end{equation}
This bound gives a constructive factor-$2$ approximation to $\mu(\pi)$.

It remains to bound $F(\pi)$  in terms of the number of physical modes $n$. For each boundary
$\beta\in\{1,\ldots,n-1\}$, at most $2\beta$ Majoranas start to the
left of the boundary, while at most $2(n-\beta)$ target positions lie
to its right. Therefore,
\begin{equation}
    F_\beta
    \leq
    2\min\{\beta,n-\beta\}.
\end{equation}
Summing over all physical-mode boundaries gives
\begin{align}
    F(\pi)
    =
    \sum_{\beta=1}^{n-1}F_\beta 
    \leq
    2\sum_{\beta=1}^{n-1}\min\{\beta,n-\beta\} 
    =
    \left\lfloor\frac{n^2}{2}\right\rfloor.
\end{align}

We now show that both bounds can be evaluated in polynomial time. The quantity $F(\pi)$ is obtained directly from the source and target modes of the Majoranas by summing the boundary flows $F_\beta$, and can therefore be computed in polynomial time.

For the upper bound, the routing procedure used in the proof is constructive. At each step, one can determine whether a productive exchange is available or otherwise select the rightmost right-moving Majorana. The proof above shows that the total number of inter-mode exchanges is at most $2F(\pi)$. Since $F(\pi)\leq \lfloor n^2/2\rfloor$, the number of such steps is
polynomial in $n$, and hence the resulting routing can be generated in polynomial time.

For the lower bound, the bipartite graph defining $\xi(\pi)$ has at most $2n$ vertices and at most $\OC(n^2)$ edges. The vertex capacities are given by $|t_i-s_i|$, and $\xi(\pi)$ is the maximum cardinality degree-constrained matching in this graph. This is a standard capacitated bipartite matching problem and can be solved in polynomial time, for example by reducing it to a maximum-flow problem~\cite{schrijver2003combinatorial}.
\end{proof}

\section{Proof of Theorem~\ref{thm:all-to-all}}
\label{ap:proof-all-to-all}

In this section, we prove that optimal synthesis is efficiently achievable with Majorana braids and all-to-all connectivity. 

\begin{theorem}
\label{thm:all-to-all_app}
For every signed permutation $w\in B_{2n}\cap \SO(2n)$ with underlying permutation $\pi=|w|$, the minimum number of generators
from $\RC_{\mathrm{all}}$ required to synthesize $w$ is
\begin{equation}
    \ell_{\mathrm{all}}(w)  = \bigl(2n-\kappa(\pi)\bigr)+d_-(w)\,,
    \label{eq:all-to-all-length_app}
\end{equation}
where $\kappa(\pi)$ is the number of cycles in $\pi$ and $d_-(w)$ is the number of defective cycles. Moreover, $\ell_{\mathrm{all}}(w)\leq2n$, and an optimal word is computable in $\mathcal \OC\left(n^2\right)$ time on a classical computer.
\end{theorem}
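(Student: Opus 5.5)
The plan mirrors the proof of Theorem~\ref{thm:path-optimal}: a potential-function lower bound, then a matching explicit construction. Define
\begin{equation*}
    \Psi(w):=\bigl(2n-\kappa(\pi)\bigr)+d_-(w)\,,
\end{equation*}
with $\Psi(e)=0$, since the identity has $2n$ fixed points, each with $\delta_e=+1$. For the lower bound it suffices to show that $|\Psi(gw)-\Psi(w)|\le 1$ for every $g=r_{ab}^{\pm1}$.

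Left multiplication by a transposition $(a\,b)$ either splits one cycle of $\pi$ into two or merges two cycles into one, so $\kappa$ changes by exactly $\pm1$. I will do a case analysis on the determinants $\delta_w(C)$ from Eq.~\eqref{eq:determinant_cycle}.
\begin{itemize}
    \item Merge case: $a,b$ lie in distinct cycles $C,C'$. The matrix of $gw$ restricted to $C\cup C'$ equals $g[C\cup C';C\cup C']\,w[C\cup C';C\cup C']$, and $g$ has unit determinant on that support. Hence the merged cycle has $\delta=\delta_w(C)\delta_w(C')$, while all other cycles are untouched.
    \item Split case: $a,b$ lie in the same cycle $C$. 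By the same argument $\delta_w(C)=\delta(C_1)\delta(C_2)$.
\end{itemize}
In both cases the number of defective cycles changes by $\Delta d_-\in\{-2,0,+2\}$ when merging (and symmetrically when splitting), correlated with the defect pattern. The cross-cases to check are:
\begin{itemize}
    \item Merging two defective cycles gives $\Delta\kappa=-1$ and $\Delta d_-=-2$, so $\Delta\Psi=+1-2=-1$.
    \item Merging with exactly one defective cycle gives $\Delta d_-=0$.
    \item Merging two good cycles gives $\Delta d_-=0$.
    \item Splitting a good cycle into two defective ones gives $\Delta\Psi=-1+2=+1$.
    \item All remaining cases give $\Delta\Psi=\pm1$.
\end{itemize}
Hence $|\Delta\Psi|=1$ always, and $\ell_{\mathrm{all}}(w)\ge\Psi(w)$.

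For the upper bound I will work by descent rather than building the word forward, and I expect this to be the main obstacle: for every $w\ne e$ I must exhibit a generator with $\Psi(gw)=\Psi(w)-1$.
\begin{itemize}
    \item If some cycle $C$ has length $L\ge2$, split off one element. Choose $a=c_j$ and $b$ the appropriate neighbour so that $(a\,b)\pi$ fixes a point. The orientation choice $r_{ab}$ versus $r_{ab}^{-1}$ flips both signs and therefore sets the sign of the new fixed point freely, so it can be made $+1$ (non-defective). The remaining cycle then inherits $\delta_w(C)$, giving $\Delta\kappa=+1$ and $\Delta d_-=0$.
    \item This descent requires checking that the freed fixed point really can be given sign $+1$. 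That holds because the two orientations differ by the sign flip $\vec e_a+\vec e_b$, and after the transposition $a$ and $b$ lie in different pieces.
    \item Once $\pi=e$, the defective cycles are fixed points with sign $-1$, and there is an even number of them because $\det w=1$. Take two such fixed points $a,b$ and apply $r_{ab}$: this merges them into a 2-cycle with $\delta=(+1)$ from the product rule, so $\Delta\kappa=-1$ and $\Delta d_-=-2$, i.e. $\Delta\Psi=-1$.
\end{itemize}
The whole descent thus uses exactly $\Psi(w)$ steps, and inverting the sequence gives an optimal word.

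The bound $\Psi(w)\le 2n$ follows from $d_-(w)\le\kappa(\pi)$. For the runtime, the cycle decomposition and the signs $S(C)$ take $\OC(n)$ time and can be updated locally. The descent has at most $2n$ steps, each costing $\OC(n)$ even with naive recomputation, for a total of $\OC(n^2)$.
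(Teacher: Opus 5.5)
Your proposal is correct and follows essentially the same route as the paper. The lower bound uses the same potential $(2n-\kappa(\pi))+d_-(w)$, with a merge/split case analysis based on multiplicativity of the block determinants $\delta_w(C)$. The upper bound uses the same greedy descent: split nontrivial cycles with an orientation chosen to fix the signs, then pair up negative fixed points. This also gives the $\mathcal{O}(n^2)$ runtime. Your choice to always split off a single positively signed fixed point is a particular case of the paper's step, which allows any pair $a,b$ in the cycle.
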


\begin{proof}

We recall that the generating set is
\begin{equation}
    \RC_{\mathrm{all}} :=
    \left\{
        r_{ab}\,, r_{ab}^{-1} :\; 1\leq a<b\leq 2n
    \right\}\,.
\end{equation}
For a cycle $C$ of $\pi$, we denote
\begin{equation}
    \delta_w(C)  := \det\left(w[C;C]\right)  \in\{+1,-1\}\,,
\end{equation}
where $w[C;C]$ is the restriction of $w$ to the coordinates in $C$. We say that $C$ is defective if $\delta_w(C)=-1$, and consider the number of defective cycles,
\begin{equation}
    d_-(w) :=
    \#\left\{
        C\text{ cycle of }\pi:
        \delta_w(C)=-1
    \right\}\,.
\end{equation}
We furthermore denote by $\kappa(\pi)$ the number of cycles of $\pi$, including fixed points.

The proof follows the same strategy as in Appendix~\ref{ap:proof_main_thm}. Namely, we define
\begin{equation}
    \Phi_{\mathrm{all}}(w) :=
    \bigl(2n-\kappa(\pi)\bigr)+d_-(w)\,,
    \label{eq:all-to-all-potential}
\end{equation}
and show that this potential characterizes the optimal word length by providing matching lower and upper bounds on $\ell_{\mathrm{all}}(w)$.

We start by proving that $\Phi_{\mathrm{all}}(w)$ gives a lower bound.
\begin{lemma}
\label{lem:all-to-all-potential}
For every $w\in  B_{2n}\cap \SO(2n)$ and every
$g\in \RC_{\mathrm{all}}$,
\begin{equation}
    \Phi_{\mathrm{all}}(gw) - \Phi_{\mathrm{all}}(w) \in\{-1,+1\}\,.
    \label{eq:all-to-all-potential-change}
\end{equation}
\end{lemma}

\begin{proof}
Multiplication by $g=r_{ab}^{\pm1}$ has one of two possible effects on the cycle decomposition of $\pi$: it either merges two distinct cycles or splits a single cycle into two. We analyze both cases separately.

First, consider the case when $a$ and $b$ belong to different cycles $C_1$ and $C_2$. Multiplication by $r_{ab}^{\pm1}$ merges them into a single
cycle $C'=C_1\cup C_2$ of $gw$. Since two cycles merge, we have one less cycle in $gw$.  
Consequently, the difference in the first term of the potential is 
\begin{equation}
    \Delta(2n-\kappa)=+1\,.
\end{equation}

Moreover, since the non-trivial action of $g$ lies entirely inside the coordinates in $C'$, and $\det(g[C';C'])=+1$, we have
\begin{equation}
    \det\left(gw[C';C']\right) = \det\left(w[C';C']\right)\,.
    \label{eq:all-to-all-det-invariance}
\end{equation}
Before applying $g$, the restriction of $w$ to $C'$ decomposes into the two cycle blocks,
\begin{equation}
    w[C';C'] =w[C_1;C_1]\oplus w[C_2;C_2]\,.
\end{equation}
Therefore
\begin{equation}
    \delta_{gw}(C')=\delta_w(C_1)\delta_w(C_2)\,.
\end{equation}
Hence, regarding the second term of the potential, there are three possibilities: 
\begin{itemize}
    \item If $C_1$ and $C_2$ are both non-defective (i.e., $\delta_w(C_1)=\delta_w(C_2)=1$), then $C'$ is non-defective (i.e., $\delta_{gw}(C')=1$), so $\Delta d_-=0$.

    \item If exactly one of $C_1,C_2$ is defective, then $C'$ is defective
    and again $\Delta d_-=0$.

    \item If $C_1$ and $C_2$ are both defective, then $ \delta_{gw}(C)=1$,
    so the two defective cycles are replaced by one non-defective cycle. Therefore, $\Delta d_-=-2$.
\end{itemize}

Consequently, we find
\begin{equation}
    \Delta\Phi_{\mathrm{all}}  =
    \begin{cases}
        -1\,, & \text{if the two cycles are both defective}\,,\\[1mm]
        +1\,, & \text{otherwise}\,.
    \end{cases}
    \label{eq:merge-potential}
\end{equation}

\medskip

We now consider the second case, when $a$ and $b$ belong to the same cycle $C$ in $\pi$. Multiplication by $r_{ab}^{\pm1}$ splits $C$ into two cycles $C_1'$ and $C_2'$. Splitting one cycle into two gives 
\begin{equation}
    \Delta(2n-\kappa)=-1\,.
\end{equation}

 Moreover, the same reasoning as before leads to 
\begin{equation}
    \delta_{gw}(C_1')\delta_{gw}(C_2') = \delta_w(C)\,.
    \label{eq:split-det}
\end{equation}

Again, there are three possibilities for the change in the second term of the potential:

\begin{itemize}
    \item If $C$ is defective, then by Eq.~\eqref{eq:split-det}, exactly one of $C_1',C_2'$ is defective. Therefore,
the number of defective cycles is unchanged, so $\Delta d_-=0$. 

    \item If $C$ is non-defective, and $C_1',C_2'$ are also both non-defective, we have $\Delta d_-=0$. 

    \item If $C$ is non-defective, and $C_1',C_2'$ are  both defective, we have $\Delta d_-=+2$.
\end{itemize}

Therefore, in all possible cases,
\begin{equation}
    \Phi_{\mathrm{all}}(gw) - \Phi_{\mathrm{all}}(w)  \in\{-1,+1\}\,,
\end{equation}
which proves the lemma.
\end{proof}

\medskip

Since $\Phi_{\mathrm{all}}(e)=0$ and each application of a generator changes the potential by at most one in absolute value, any word over $\RC_{\mathrm{all}}$ producing
$w$ from the identity must contain at least $\Phi_{\mathrm{all}}(w)$ generators. Hence,
\begin{equation}
    \ell_{\mathrm{all}}(w)   \geq   \Phi_{\mathrm{all}}(w)\,.
    \label{eq:all-to-all-lower-bound}
\end{equation}

We now prove the converse inequality by providing an algorithm that efficiently synthesizes $w$ using $\Phi_{\mathrm{all}}(w)$ generators. 

Suppose first that the underlying permutation $\pi$ contains a cycle $C$ of length at least two. Choose two distinct elements $a,b\in C$. Multiplication by $r_{ab}^{\pm1}$ splits $C$ into two cycles.

If $C$ is defective, we have shown that either orientation decreases $\Phi_{\mathrm{all}}$ by one. If $C$ is non-defective, choose the orientation that produces two non-defective cycles. As shown in the proof of Lemma~\ref{lem:all-to-all-potential},
this again gives $\Delta\Phi_{\mathrm{all}}=-1$.
Repeating this procedure eventually removes every nontrivial cycle of $\pi$, so that the underlying permutation becomes the identity.

Once we follow these steps, we have $|\tilde w'| = \pi' = e$ and therefore every cycle is a fixed point. We may have defective fixed points, which correspond to negative entries. Since $\det(w')=+1$, the number of negative fixed points is even.
To fix this, choose a pair of defective fixed points $a$ and $b$. On these two coordinates, the signed permutation is $(-a,-b)$. Using the established convention,
\begin{equation}
    (-a,-b)
    \xmapsto{\,r_{ab}\,}
    (b,-a)
    \xmapsto{\,r_{ab}\,}
    (a,b).
    \label{eq:defective-fixed-points}
\end{equation}
The first application merges the two defective fixed points into a
non-defective $2$-cycle, decreasing $\Phi_{\mathrm{all}}$ by one. The
second application splits this non-defective $2$-cycle into two
non-defective fixed points, again decreasing $\Phi_{\mathrm{all}}$ by
one.

Therefore, from every $w\neq e$, there exists a generator that decreases
the potential by exactly one. Iterating this descent gives a word $W_{\mathrm{sort}}$ satisfying $ W_{\mathrm{sort}}w=e$ and $|W_{\mathrm{sort}}|=\Phi_{\mathrm{all}}(w)$.
Consequently, $w=W_{\mathrm{sort}}^{-1}$ and reversing the word and inverting each generator does not change its length. Hence,
\begin{equation}
    \ell_{\mathrm{all}}(w)
    \leq
    \Phi_{\mathrm{all}}(w)
    =
    2n-\kappa(\pi)+d_-(w).
    \label{eq:all-to-all-upper-bound}
\end{equation}

Combining Eqs.~\eqref{eq:all-to-all-lower-bound} and
\eqref{eq:all-to-all-upper-bound}, we conclude that
\begin{equation}
    \ell_{\mathrm{all}}(w)
    =
    2n-\kappa(\pi)+d_-(w)
    .
\end{equation}

The expression for the optimal length also determines the diameter of the Cayley graph associated with $R_{\mathrm{all}}$. Since the generating set is symmetric, the Cayley distance from the identity to $w$ is precisely $\ell_{\mathrm{all}}(w)$. Moreover, for every signed permutation,
\begin{equation}
    d_-(w)\leq \kappa(\pi),
\end{equation}
since every defective cycle is, in particular, a cycle of $\pi$. Therefore,
\begin{equation}
    \ell_{\mathrm{all}}(w)
    =
    2n-\kappa(\pi)+d_-(w)
    \leq 2n.
\end{equation}
This upper bound is attained, for example, by
$w=-e$. In this case $\pi=e$, so $\kappa(\pi)=2n$, while all $2n$ fixed-point cycles are defective and hence $d_-(w)=2n$. Thus
\begin{equation}
    \ell_{\mathrm{all}}(-e)=2n,
\end{equation}
and consequently
\begin{equation}
    \operatorname{diam}
    \left[
        \operatorname{Cay}
        \left(
            B_{2n}\cap\SO(2n),
            \mathcal{R}_{\mathrm{all}}
        \right)
    \right]
    =
    2n.
\end{equation}

The constructive proof directly provides an optimal synthesis algorithm. Starting from $w$, at each step we choose a generator $g\in R_{\mathrm{all}}$ such that $\Phi_{\mathrm{all}}(gw) = \Phi_{\mathrm{all}}(w)-1$, and repeat until the identity is reached. A suitable generator can be found in $\mathcal O(n)$ time. Indeed, from the window notation of $w$, the cycle decomposition of the underlying permutation and the defect of each cycle can be obtained by traversing the $2n$ entries once. If a nontrivial cycle is present, the proof specifies a pair of elements $a,b$ in that cycle and an orientation $r_{ab}^{\pm1}$ that decreases the potential. If the underlying permutation is already the identity, the same traversal identifies a pair of defective fixed points $(a,b)$, which can be removed using $r_{ab}^2$.

Since every generator selected in this way decreases $\Phi_{\mathrm{all}}$ by exactly one, the number of iterations is precisely the optimal length $\ell_{\mathrm{all}}(w)$. Moreover, the Cayley-graph diameter gives
\begin{equation}
    \ell_{\mathrm{all}}(w)
    =
    \Phi_{\mathrm{all}}(w)
    \leq 2n,
\end{equation}
so at most $\mathcal O(n)$ generators have to be selected. Performing $\mathcal O(n)$ work to determine each of at most $\mathcal O(n)$ generators therefore gives a total classical runtime of $\mathcal O(n^2)$.

\end{proof}

\section{Proof of Theorems~\ref{th:tree-npcomplete} and~\ref{th:tree-depth}}
\label{ap:trees}

In this appendix, we prove NP-completeness of optimal matchgate-Clifford synthesis on tree connectivity graphs, namely Theorems~\ref{th:tree-npcomplete} and~\ref{th:tree-depth}.

\begin{theorem} \label{thm:trees_app}
    The decision version of gate-optimal matchgate-Clifford synthesis with Majorana braids on tree connectivity graphs is NP-complete.
\end{theorem}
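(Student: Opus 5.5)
The plan has two parts: membership in NP via a polynomial length bound, and NP-hardness via a reduction from token swapping on trees~\cite{aichholzer2021hardness}. In the reduction, the signs of the target are fixed by the permutation alone, so that the braid cost equals the swap cost exactly.

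\emph{Membership in NP.} Let the tree $T$ have vertex set $[2n]$, one vertex per Majorana. I will show that every $w\in B_{2n}\cap\SO(2n)$ has a braid word along the edges of $T$ of length $\OC(n^2)$.
\begin{enumerate}[label=(\roman*)]
\item Realize the underlying permutation $\pi$ with any token-swapping procedure on $T$. Repeatedly route a token whose vertex is a leaf of the current tree to that leaf and delete the leaf. This uses $\OC(n)$ swaps per token, so $\OC(n^2)$ in total. Each swap is lifted to $r_e^{\pm1}$.
\item This yields a signed permutation $u$ with $|u|=|w|$ and $\det u=+1$. Hence $u$ and $w$ differ in signs at an even number of positions.
\item Pair up the mismatched positions $a,b$. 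For the path $a=v_0,v_1,\dots,v_k=b$ in $T$, apply $r_{v_0v_1}^2r_{v_1v_2}^2\cdots r_{v_{k-1}v_k}^2$. This flips exactly the signs at $a$ and $b$ and costs $\OC(n)$ gates.
\end{enumerate}
There are $\OC(n)$ pairs, so the total length is $\OC(n^2)$. A guessed word is therefore a polynomial certificate, checkable by matrix multiplication.

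\emph{Hardness.} Take a token-swapping instance $(T,\pi,k)$. Define the target $w$ by giving each cycle $C$ of $\pi$ signs with $\delta_w(C)=\det w[C;C]=+1$; for example, all signs $+$ except one $-$ on each even-length cycle. Then $\det w=\prod_C\delta_w(C)=+1$. The claim is $\ell_T(w)=\mathrm{ts}_T(\pi)$, the optimal token-swapping number; it is proved by two inequalities.
\begin{enumerate}[label=(\alph*)]
\item \emph{Lower bound.} Forgetting signs maps any braid word for $w$ to a swap sequence for $\pi$ of the same length, so $\ell_T(w)\ge\mathrm{ts}_T(\pi)$.
\item \emph{Upper bound, setup.} Take an optimal swap sequence (unknown to the reduction) and lift it with arbitrary orientations. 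Form the ``interaction graph'' on tokens, with $u\sim v$ whenever some swap exchanges $u$ and $v$. Each connected component $K$ is a union of cycles of $\pi$, because the tokens of $K$ only ever exchange among themselves.
\item \emph{Upper bound, sign adjustment.} Every generator acting on tokens of $K$ has determinant $+1$ on $K$. Hence the lifted word has block determinant $+1$ on $K$. The target also has block determinant $\prod_{C\subseteq K}\delta_w(C)=+1$ on $K$. By the path-flipping argument of Lemma~\ref{lem:sign_vectors}, where flipping the orientations of all swaps along a path in the interaction graph of $K$ flips exactly the two endpoint signs, the orientations can be adjusted to produce the target signs on $K$. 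Untouched fixed tokens are singleton components with sign $+$, which matches.
\item Together these give $\ell_T(w)\le\mathrm{ts}_T(\pi)$. Since $w$ is computed from $\pi$ in linear time, the decision instances coincide, and NP-completeness follows.
\end{enumerate}
The key point is that the condition ``every cycle is non-defective'' is compatible with every swap sequence, so the reduction needs no knowledge of an optimal routing.

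\emph{Main obstacle.} The remaining technicality is parity: Majoranas come in an even number $2n$, while the hard token-swapping trees may have an odd number of vertices. I would first inspect the construction of~\cite{aichholzer2021hardness} for the freedom to make the vertex count even. Failing that, I would attach a pendant vertex $x$ carrying its own token and prove a small lemma: there is an optimal swap sequence that never uses the edge to $x$. The exchange argument is to replace every occurrence of token $x$'s role by its neighbour and shortcut, which cannot lengthen the sequence. I expect this padding step, rather than the sign analysis, to need the most care.
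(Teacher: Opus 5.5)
Your membership argument (leaf-by-leaf routing with $\OC(n^2)$ swaps, then cancelling sign mismatches in pairs with $r^2$ along tree paths) matches the paper's proof. So does your hardness reduction: one minus sign on each even-length cycle so that every cycle is non-defective, a lower bound by forgetting signs, and an upper bound by re-orienting braids along paths in the components of the interaction graph. This corresponds to the paper's $w^*(\pi)$, its orientation lemma, and its bound $\ell_T(w)\le 3n(2n-1)$. That part is sound.

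The step that would fail is your fallback for the parity issue. You propose attaching a pendant vertex $x$ carrying its own token and proving that some optimal swap sequence never uses the edge to $x$. That lemma is Vaughan's Happy Leaf Conjecture, and it is false. Biniaz et al.\ (\emph{Token swapping on trees}) construct trees with a happy leaf for which every optimal swap sequence swaps that leaf. Your ``replace and shortcut'' exchange argument breaks exactly when the sequence parks some token on $x$ while another token sits on the neighbour $y$. The tree $T$ has no room to simulate both occupancies, and such parking can strictly save swaps. Consequently, padding by a happy leaf can give $\mathrm{ts}_{T+x}(\pi\oplus\mathrm{id}_x)<\mathrm{ts}_T(\pi)$, which turns no-instances into yes-instances at the same threshold $k$. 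To handle odd vertex counts, you would need to work inside the source hardness construction of~\cite{aichholzer2021hardness} (your first option), or use a padding gadget whose contribution you can pin down exactly. The paper's own proof does not treat this point at all; it simply starts from $\pi\in S_{2n}$.
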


\begin{proof}

Let $T=(V,E)$ be a tree on the vertex set $V=[2n]$. For every edge $(a,b)\in E$, we denote by $r_{ab}$ and by $r_{ab}^{-1}$ the signed transpositions in Eq.~\eqref{eq:transpositions} that swap $\gamma_a$ and $\gamma_b$.
We write $\ell_T(w)$ for the minimum number of such generators required to synthesize $w$.

First, we show that the problem is NP-hard by reducing from optimal token swapping on a tree. The latter asks for an optimal sequence of swaps along the edges of the tree that synthesizes a target permutation of tokens placed on the vertices, and its decision version is NP-Complete~\cite{aichholzer2021hardness}. Given an input permutation  $\pi\in S_{2n}$, we associate to it a signed permutation $w^*(\pi)\in B_{2n}\cap \SO(2n)$, with the following sign configuration: For every even-length cycle $C$ of $\pi$, we assign a negative sign to the smallest element in $C$, and positive signs to all other elements.  We denote the resulting signs by $s_j^*(\pi)\in\{\pm1\}$, and define the signed permutation $w^*(\pi)$ by
\begin{equation}
    w^*(\pi)\vec{e}_j=s_j^*(\pi)\vec{e}_{\pi(j)}\,.
\end{equation}
Therefore, for every cycle $C$ of $\pi$, the product of the assigned signs is $-1$ if its length $|C|$ is even and $+1$ if it is odd, i.e.,
\begin{equation}
    \prod_{j\in C}s_j^*(\pi)=\operatorname{sgn}(C)\,.
\label{eq-ap:balanced-cycle}
\end{equation}
The same identity holds for every $\pi$-invariant set, since such a set is a union of cycles. In particular, we have that
\begin{equation}
    \prod_{j=1}^{2n} s_j^*(\pi) = \operatorname{sgn}(\pi)\,,
\end{equation}
and so we arrive at
\begin{equation}
\det\bigl(w^*(\pi)\bigr)=\operatorname{sgn}(\pi)\prod_{j=1}^{2n} s_j^*(\pi)=1\,.
\end{equation}
Hence, $w^*(\pi)\in B_{2n}\cap\operatorname{SO}(2n)$, and it is clear that $w^*(\pi)$ can be constructed in polynomial time from the cycle decomposition of $\pi$~\footnote{The cycle decomposition of $\pi$ can also be efficiently computed on a classical computer.}.

We next show that the signs of $w^*(\pi)$ can be efficiently synthesized simply by choosing between $r_{ab}$ and $r_{ab}^{-1}$ for each braid in an arbitrary routing of $\pi$.

\begin{lemma}
\label{lem:orientation}
Let $S=S_{a_L,b_L}\cdots S_{a_1,b_1}$ be any sequence of swaps along edges of $T$ that realizes $\pi$. There always exist orientations $\varepsilon_1,\ldots,\varepsilon_L\in\{\pm1\}$ such that
\begin{equation}
    r_{a_Lb_L}^{\varepsilon_L}\cdots r_{a_1b_1}^{\varepsilon_1}=w^*(\pi)\,,
\label{eq-ap:oriented-routing}
\end{equation}
and they can be found in polynomial time on a classical computer. 
\end{lemma}

\begin{proof}
Let 
\begin{equation}
    \sigma_i=S_{a_i\,b_i}\cdots S_{a_1\,b_1}\,,\qquad \sigma_0=\id\,,
\end{equation}
so that $\sigma_L=\pi$. 
Immediately before step $i$, the two Majoranas exchanged by the swap $S_{a_i,b_i}$ are
\begin{equation} \label{eq-ap:swapped-Majoranas}
    \gamma_{\sigma_{i-1}^{-1}(a_i)}\,,\qquad \gamma_{\sigma_{i-1}^{-1}(b_i)}\,.
\end{equation}
We define the swap multigraph $H(S)$ on the token set $\gamma_1,\ldots,\gamma_{2n}$ by adding an edge $\left(\gamma_{\sigma_{i-1}^{-1}(a_i)},\gamma_{\sigma_{i-1}^{-1}(b_i)}\right)$ for every step $i$. Clearly, the set of Majoranas on a connected component of $H(S)$ is $\pi$-invariant, as $\pi$ acts on this set by transpositions whose endpoints belong to the same component.

Let us consider
\begin{equation} \label{eq-ap:standard-orientation}
    w^\circ =r_{a_Lb_L}\cdots r_{a_1b_1}\,,
\end{equation}
such that  $w^\circ \vec{e}_j= s_j(w^\circ)\vec{e}_{\pi(j)}$. We aim to determine what sign configurations can be attained by changing the orientation of the braids in Eq.~\eqref{eq-ap:standard-orientation}. In particular, we note that we can write an arbitrary signed permutation as $w=w^\circ D\left(\vec{\delta}\right) $, where $\vec{\delta}\in\mathbb{F}_2^{2n}$ and $D(\vec{\delta})= \operatorname{diag}\left((-1)^{\delta_1},\ldots,(-1)^{\delta_{2n}}\right)$. 
Choosing between $r_{a_ib_i}$ and $r_{a_ib_i}^{-1}$ flips the signs of both swapped Majoranas in Eq.~\eqref{eq-ap:swapped-Majoranas}, corresponding to $\vec{\delta} = \vec{e}_{\sigma_{i-1}^{-1}(a_i)} + \vec{e}_{\sigma_{i-1}^{-1}(b_i)}$. Hence, $\vec{\delta}$ is in the span of binary vectors with even Hamming weight on every connected component. Conversely, any binary vector with even Hamming weight on a connected component can be realized by orienting the braids. Indeed, since there is always a path between any two Majoranas (by definition of connectedness), we can implement $\vec{\delta}=\vec{e}_j +\vec{e}_{j'}$, for any $j,j'$ in the connected component (see Lemma~\ref{lem:sign_vectors}).

Moreover, since $w^\circ$ and $w^*(\pi)$ have the same underlying permutation, there is a unique $\vec{\delta}^*\in\mathbb{F}_2^{2n}$ such that
\begin{equation} \label{eq-ap:lF2-system}
    w^*(\pi)= w^\circ  D(\vec{\delta}^*) \,.
\end{equation}
The previous equations implies that on every connected component $C$ of $H(S)$,
\begin{equation}
    \prod_{j\in C}(-1)^{\delta_j^*}=\left(\prod_{j\in C}s_j^*(\pi)\right)\left(\prod_{j\in C}s_j(w^\circ)\right)=1\,,
\end{equation}
where in the last equality we used that 
\begin{equation}
    \prod_{j\in C}s_j^*(\pi) = \operatorname{sign}(\pi_C)\,,\qquad\prod_{j\in C}s_j(w^\circ) = \operatorname{sign}(\pi_C)\,,
\end{equation}
 with $\pi_C$ the induced sub-permutation on $C$ (this follows because a product of Majorana braids always has determinant $+1$).
Thus, $\vec{\delta}^*$ has even Hamming weight on every component and can be realized by a choice of $\varepsilon_1,\ldots,\varepsilon_L$ in Eq.~\eqref{eq-ap:oriented-routing}. The orientations can be efficiently found classically by solving the associated linear system over $\mathbb{F}_2$, namely, Eq.~\eqref{eq-ap:lF2-system}.
\end{proof}

Lemma~\ref{lem:orientation} directly implies that given an instance $\pi$ of token swapping on a tree $T$, we can reduce it to solving the optimal synthesis problem for $w^*$, as an optimal sequence for $w^*$ gives an optimal sequence for $\pi$ by simply replacing the braids by swaps. For if that were not optimal for $\pi$, there would exist a shorter sequence, but also orientations $\varepsilon_1,\ldots,\varepsilon_L$ realizing $w^*$, contradicting the assumption that the original sequence was optimal for $w^*$ in the first place. Hence the decision version of gate-optimal matchgate-Clifford synthesis with Majorana braids on tree connectivity graphs is NP-hard.

To prove membership in NP, we need a polynomial upper bound on the length of a synthesis witness. Indeed, given a polynomial-length witness sequence, it can be efficiently verified whether it actually implements the target permutation matrix. And if the length threshold exceeds the bound, the instance is automatically a yes-instance, which concludes the proof.

\begin{lemma}
\label{lem:polynomial-bound}
For every tree $T$ on $[2n]$ and every $w\in B_{2n}\cap\SO(2n)$,
\begin{equation}
    \ell_T(w)\leq 3n(2n-1)\,.
\label{eq:polynomial-bound}
\end{equation}
\end{lemma}

\begin{proof}
Let $\pi=|w|$. Any permutation can be routed on a tree using at most $\binom{2n}{2}$ swaps. First, we move the token bound for a leaf to that leaf using at most $2n-1$ swaps, remove the leaf, and continue recursively. The next leaf will require at most $2n-2$ swaps and so on, so at most $(2n-1)+(2n-2)+\ldots 1=\binom{2n}{2}$ swaps suffice. We orient all braids in this routing in the same direction, and let $w^\circ$ be their product. Since $w$ and $w^\circ$ have the same underlying permutation and both have determinant $+1$, there exists an even-weight vector $\vec{\delta}\in\mathbb{F}_2^{2n}$ such that $w= w^\circ D(\vec{\delta})$.

We can implement $D(\vec{\delta})$ using at most $2n(2n-1)$ braids. To see this, let $j,j'$ be vertices of $T$ such that $\delta_j=\delta_{j'}=1$, and let $P= x_0,x_1,\ldots,x_d$, with $x_0=j$ and $x_d=j'$, be the unique path between them in $T$. Using the fact that $r_{x_{i-1}x_i}^2$ flips the signs at the two endpoints of the corresponding edge, ($x_{i-1},x_i)$, we find
\begin{equation}
    D(\vec{e}_j+\vec{e}_{j'})=\prod_{i=1}^{|P|} r_{x_{i-1}x_i}^2\,.
\end{equation}
Since $|P|\leq 2n-1$ on a tree with $2n$ vertices, implementing $D(\vec{e}_j+\vec{e}_{j'})$ requires at most $2(2n-1)$ generators. Moreover, there are at most $n$ pairs of ones in $\vec{\delta}$ (recall that $\vec{\delta}$ has even Hamming weight), and therefore
\begin{equation}
    \ell_T(w)\leq\binom{2n}{2}+2n(2n-1)= 3n(2n-1)\,.
\end{equation}

\end{proof}

\end{proof}

Next, we consider optimal synthesis with respect to depth.

\begin{theorem} 
    The decision version of depth-optimal matchgate-Clifford synthesis with Majorana braids on tree connectivity graphs is NP-complete.
\end{theorem}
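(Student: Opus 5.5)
The plan is to reuse the gate-count argument almost verbatim, replacing token swapping by its parallel variant. Aichholzer et al.~\cite{aichholzer2021hardness} also prove that parallel token swapping on trees is NP-complete. In that version a single step applies any \emph{matching} of tree edges simultaneously, and one asks whether a target permutation can be realized in at most $k$ steps. Correspondingly, a depth-$k$ braid circuit on $T$ is a product of $k$ layers, each a product of braids $r_{ab}^{\pm1}$ on pairwise disjoint edges (a matching). Braids on disjoint edges commute, so each layer is well defined.

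\textbf{NP-hardness.} Given an instance $(T,\pi,k)$ of parallel token swapping, we build $w^*(\pi)$ exactly as in the proof of Theorem~\ref{thm:trees_app}, in polynomial time, and ask whether $w^*(\pi)$ has braid depth at most $k$.

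For one direction, forgetting signs maps any depth-$k$ braid circuit for $w^*(\pi)$ to a parallel swap schedule for $\pi$ of $k$ layers. The matchings are unchanged.

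For the converse, take a parallel schedule of depth $k$ and serialize it into a swap sequence $S=S_{a_L,b_L}\cdots S_{a_1,b_1}$, ordering the swaps within each layer arbitrarily. Lemma~\ref{lem:orientation} applies to any swap sequence realizing $\pi$, so it gives orientations $\varepsilon_i$ with $\prod_i r_{a_ib_i}^{\varepsilon_i}=w^*(\pi)$. Regrouping the oriented braids into the original layers keeps them on disjoint edges, so the depth stays $k$. The one point to check is that the swap multigraph $H(S)$, and hence the parity argument in the lemma, does not depend on the serialization order within a layer. It does not matter in any case, because the lemma holds for any fixed serialization. Hence depth-optimal synthesis of $w^*(\pi)$ decides parallel token swapping, and the problem is NP-hard.

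\textbf{Membership in NP.} A witness is a list of layers, each a matching of $T$ with an orientation on every edge. Checking that each layer is a matching and multiplying the signed permutations takes polynomial time. We only need polynomially many layers, because depth is at most gate count, and Lemma~\ref{lem:polynomial-bound} gives a circuit with at most $3n(2n-1)$ gates; placing one gate per layer yields depth at most $3n(2n-1)$. If the threshold $k$ exceeds this bound, we answer yes immediately.

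\textbf{Main obstacle.} The step needing care is making sure the cited hardness result really is for parallel token swapping on trees, with the tree given as part of the input. If the reference only covered general graphs, I would instead attempt a direct reduction. The sign-handling step is unchanged either way, since Lemma~\ref{lem:orientation} depends only on the underlying swap sequence and not on its length or depth.
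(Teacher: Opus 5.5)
Your proposal is correct and follows the paper's proof. Hardness comes from mapping a parallel token-swapping instance on a tree to $w^*(\pi)$ and using Lemma~\ref{lem:orientation}; Aichholzer et al.\ do prove the parallel version NP-hard on trees, so the obstacle you flag does not arise. Membership in NP comes from the fact that optimal depth is at most the gate bound $3n(2n-1)$ of Lemma~\ref{lem:polynomial-bound}. Your serialize--orient--regroup step, justified by commutation of braids on disjoint edges, spells out exactly what the paper's ``works in the same way'' leaves implicit.

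One point neither you nor the paper addresses: the braid problem is posed on trees with an even number $2n$ of vertices. Odd-size token-swapping instances would therefore need a padding step, and attaching a pendant leaf whose token is already in place is not obviously optimum-preserving.
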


\begin{proof}
    In this case, the reduction is from the version of token swapping which asks to minimize the number of parallel rounds of swaps, where every round contains pairwise disjoint swaps. This problem was proven NP-complete in Ref.~\cite{aichholzer2021hardness}. The reduction works exactly in the same way as for Theorem~\ref{thm:trees_app}, using $w^*$ to prove NP-hardness. To show inclusion in NP, we just note that the optimal depth cannot be larger than the optimal gate count $\ell_T(w)$ from Lemma~\ref{lem:polynomial-bound}. 
\end{proof}

\section{Proof of Corollary~\ref{cor:connected-graphs}}
\label{ap:general-graphs}

In this appendix, we provide a polynomial upper bound on the number of Majorana braids required to synthesize an arbitrary matchgate-Clifford unitary on a connected graph.

Let $G=(V,E)$ be a connected graph on the vertex set $V=[2n]$. For every edge $(a,b)\in E$, we include the generators $r_{ab}$ and $r_{ab}^{-1}$, and denote by $\ell_G(w)$ the minimum number of such generators required to synthesize $w\in B_{2n}\cap \SO(2n)$. We prove the following lemma.

\begin{lemma} \label{lem:connected-graphs}
For every connected graph $G$ on $[2n]$ and every
$w\in B_{2n}\cap \SO(2n)$,
\begin{equation}
    \ell_G(w) \leq 3n(2n-1)\,.
\end{equation}
\end{lemma}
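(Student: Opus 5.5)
The plan is to reduce to the tree case. Since $G$ is connected on the vertex set $[2n]$, we first fix a spanning tree $T\subseteq G$. It can be found in polynomial time, for instance by breadth-first search. Every edge $(a,b)$ of $T$ is also an edge of $G$, so the generators $r_{ab}^{\pm1}$ associated with $T$ form a subset of those associated with $G$. Any word over the $T$-generators is therefore also a word over the $G$-generators. Minimizing over a larger generating set can only decrease the optimal length, which gives the monotonicity
\begin{equation}
    \ell_G(w)\leq \ell_T(w)\qquad \text{for every } w\in B_{2n}\cap\SO(2n)\,.
\end{equation}

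Next we invoke Lemma~\ref{lem:polynomial-bound}. It applies to every tree on $[2n]$, and in particular to the spanning tree $T$, so $\ell_T(w)\leq 3n(2n-1)$. Combining this with the monotonicity inequality yields the claimed bound $\ell_G(w)\leq 3n(2n-1)$.

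This argument recovers the counting from the proof of Lemma~\ref{lem:polynomial-bound}, now run on $T$. We route $\pi=|\tilde w|$ with at most $\binom{2n}{2}$ swaps by recursive leaf removal in $T$. We then orient all the braids uniformly to obtain some $w^\circ$ with $|\tilde w^\circ|=\pi$ and $\det w^\circ=1$. The residual sign matrix $D(\vec{\delta})$ has even weight, and we clear it pair by pair using products of $r_{x_{i-1}x_i}^2$ along tree paths of length at most $2n-1$. This contributes at most $n$ pairs at cost at most $2(2n-1)$ each.

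No step here is a real obstacle. The one point to state explicitly is the monotonicity of $\ell$ under enlarging the generating set, and the fact that a connected graph always admits a spanning tree on the same vertex set. Together with Theorem~\ref{thm:trees_app}, this polynomial witness bound places the gate-count decision problem on general graphs in NP. NP-hardness is inherited from trees, since trees are a special case of connected graphs. For the depth version, the optimal depth is at most the gate count, so the same bound places it in NP as well, which yields Corollary~\ref{cor:connected-graphs}.
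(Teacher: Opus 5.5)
Your proposal is correct and is essentially the paper's own argument: pass to a spanning tree $T\subseteq G$, use $\ell_G(w)\leq \ell_T(w)$ because the $T$-braids are a subset of the $G$-braids, and apply Lemma~\ref{lem:polynomial-bound} to $T$. Your recap of the tree-case count also matches the paper's proof of that lemma: at most $\binom{2n}{2}$ routing swaps, plus at most $n$ sign-pair corrections costing at most $2(2n-1)$ each.
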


\begin{proof}
Since $G$ is connected, it contains a spanning tree $T$ on the same vertex set (obtained for instance by successively removing edges from cycles while preserving connectedness).
Hence, every Majorana braid allowed on $T$ is also an allowed generator on $G$, which implies
\begin{equation}
    \ell_G(w) \leq \ell_T(w).
\end{equation}
By Lemma~\ref{lem:polynomial-bound}, we arrive at
\begin{equation}
     \ell_G(w) \leq \ell_T(w) \leq 3n(2n-1)\,.
\end{equation}
\end{proof}

Therefore, combining Theorems~\ref{th:tree-npcomplete} and~\ref{th:tree-depth} with Lemma~\ref{lem:connected-graphs} we obtain Corollary~\ref{cor:connected-graphs}, that we restate for convenience.
 \begin{corollary}
    The decision version of gate-optimal and depth-optimal matchgate-Clifford synthesis with Majorana braids on general connectivity graphs is NP-complete.
\end{corollary}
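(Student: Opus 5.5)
The plan is to assemble the corollary from the two tree-connectivity results, Theorems~\ref{th:tree-npcomplete} and~\ref{th:tree-depth}, together with the universal length bound of Lemma~\ref{lem:connected-graphs}. The two halves of NP-completeness, hardness and membership, are handled separately.

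\textbf{NP-hardness.} Here there is essentially nothing new to prove. The general problem takes as input an arbitrary connected graph $G$ on $[2n]$, a target $w\in B_{2n}\cap\SO(2n)$ and a threshold $k$. Every tree is itself a connected graph, so the tree-restricted problem is a special case of the general one. The identity map on instances is therefore a polynomial-time reduction. NP-hardness of gate-optimal synthesis is then inherited from Theorem~\ref{th:tree-npcomplete}, and NP-hardness of depth-optimal synthesis from Theorem~\ref{th:tree-depth}.

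\textbf{Membership in NP.} The certificate is a word $g_1\cdots g_\ell$ of braids $r_{ab}^{\pm1}$ with $(a,b)\in E(G)$. For the depth version, the certificate is a partition of this word into layers of pairwise disjoint braids. The verifier performs three checks:
\begin{itemize}
\item every braid lies on an edge of $G$, and in the depth version the supports within each layer are disjoint;
\item the number of gates, or the number of layers, is at most $k$;
\item the product of the signed permutations equals $w$.
\end{itemize}
Each check takes $\OC(\ell\, n)$ time when signed permutations are composed in window notation.

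The only issue is keeping $\ell$ polynomial even when $k$ is huge, since $k$ is given in binary. To handle this, the verifier first compares $k$ with $3n(2n-1)$. If $k\geq 3n(2n-1)$, it accepts outright. This is justified because Lemma~\ref{lem:connected-graphs} gives $\ell_G(w)\leq 3n(2n-1)$ for every connected $G$. Since any word can be executed one braid per layer, the optimal depth is at most $\ell_G(w)$ as well. Otherwise $k<3n(2n-1)$, and the certificate length is bounded by $\OC(n^2)$.

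\textbf{Connectivity assumption.} The step needing the most care is the bound itself, which relies on $G$ being connected. Lemma~\ref{lem:connected-graphs} extracts a spanning tree $T$ and applies Lemma~\ref{lem:polynomial-bound}. This is where connectivity is actually used: on a disconnected graph some targets are unreachable and the problem is not well posed. I would state explicitly that connectivity is part of the input promise, and note that it can be checked in linear time. Both the gate-count and depth versions then follow at once.
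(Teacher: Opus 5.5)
Your proposal is correct and follows the paper's route exactly: hardness is inherited from Theorems~\ref{th:tree-npcomplete} and~\ref{th:tree-depth} because trees are special connected graphs, and membership in NP comes from the spanning-tree bound $\ell_G(w)\leq 3n(2n-1)$ of Lemma~\ref{lem:connected-graphs}, which also caps the optimal depth. One small imprecision: in the depth version, a certificate with fewer than $3n(2n-1)$ layers can contain up to $n$ disjoint braids per layer, so its length is $\OC(n^3)$ rather than $\OC(n^2)$; this is still polynomial and does not affect the argument.
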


\end{document}